\providecommand{\tabularnewline}{\\}
\numberwithin{equation}{section}
\theoremstyle{plain}
\newtheorem{thm}{\protect\theoremname}
\theoremstyle{plain}
\newtheorem{lem}[thm]{\protect\lemmaname}
\theoremstyle{plain}
\newtheorem{prop}[thm]{\protect\propositionname}
\newtheorem{cond}[thm]{Condition}
\providecommand{\lemmaname}{Lemma}
\providecommand{\propositionname}{Proposition}
\providecommand{\theoremname}{Theorem}
\providecommand{\lemmaname}{Lemma}
\providecommand{\propositionname}{Proposition}
\providecommand{\theoremname}{Theorem}
\begin{document}
\global\long\def\IN{\mathbb{N}}%
\global\long\def\II{\mathbbm{1}}%
\global\long\def\IZ{\mathbb{Z}}%
\global\long\def\IQ{\mathbb{Q}}%
\global\long\def\IR{\mathbb{R}}%
\global\long\def\IC{\mathbb{C}}%
\global\long\def\IP{\mathbb{P}}%
\global\long\def\IE{\mathbb{E}}%
\global\long\def\IV{\mathbb{V}}%

\title{Models of Opinion Dynamics with Random Parametrisation}
\author{Gabor Toth\thanks{Corresponding author, email: gabor.toth@iimas.unam.mx}\\
{\footnotesize IIMAS-UNAM,}\\
{\footnotesize Escolar 3000, C.U., Coyoac�n, 04510, Mexico City, Mexico}}
\maketitle
\begin{abstract}
\noindent We analyse a generalisation of the Galam model of binary
opinion dynamics in which iterative discussions take place in local
groups of individuals and study the effects of random deviations from
the group majority. The probability of a deviation or flip depends
on the magnitude of the majority. Depending on the values of the flip
parameters which give the probability of a deviation, the model shows
a wide variety of behaviour. We are interested in the characteristics
of the model when the flip parameters are themselves randomly selected,
following some probability distribution. Examples of these characteristics
are whether large majorities and ties are attractors or repulsors,
or the number of fixed points in the dynamics of the model. Which
of the features of the model are likely to appear? Which ones are
unlikely because they only present as events of low probability with
respect to the distribution of the flip parameters? Answers to such
questions allow us to distinguish mathematical properties which are
stable under a variety of assumptions on the distribution of the flip
parameters from features which are very rare and thus more of theoretical
than practical interest. In this article, we present both exact numerical
results for specific distributions of the flip parameters and small
discussion groups and rigorous results in the form of limit theorems
for large discussion groups. Small discussion groups model friend
or work groups -- people that personally know each other and frequently
spend time together. Large groups represent scenarios such as social
media or political entities such as cities, states, or countries. 
\end{abstract}
Keywords: opinion dynamics, hierarchical voting, contrarianism, random
parameters, limit theorems

2020 Mathematics Subject Classification: 91D30, 91B12, 60F05

\section{Introduction}

Sociophysics is the study of social phenomena by means of methods
developed for the study of the physical world. Among the topics of
interest in sociophysics lies the study of opinion dynamics (see e.g.
\cite{ChChCh2006,CFL2009,Schw2019}). One of the earliest contributions
to the field was the Galam model of opinion dynamics first introduced
in \cite{Gal1986}. Aside from the Galam model, the situations in
which opinion dynamics have been analysed by sociophysics is highly
varied: both the time component and the opinion space, i.e. the space
of all possible opinions an individual can hold, can be discrete or
continuous. For continuous opinion spaces, see the Friedkin-Johnsen
model \cite{FJ1990} and the bounded confidence models \cite{DNAW2000}.
Beyond sociophysics, opinion dynamics has been studied by scientists
from other fields, too: both psychologists and mathematicians have
made contributions. Some early contributions of social psychologists
are \cite{As1956,Ab1964,DGr1974}. Recent articles by mathematicians
studying opinion dynamics models are \cite{LT2015,AMBG2016,BBP2017},
which provide mathematically rigorous results about the models involved.

The Galam model features a binary choice facing each member of a large
population. Discussions occur in small groups of size $r\in\IN$ which
model groups of friends or work colleagues. Once the discussion finishes,
all members of the group adopt the majority opinion. Then the population
forms new discussion groups, and a new round of discussions takes
place with the adoption of the `local' i.e. discussion group majority
at the end of the round. We denote by the number $p_{0}\in\left[0,1\right]$
the proportion among the population preferring one of the alternatives,
say alternative $A$, before any discussions take place, and by $p_{t}$,
$t\in\IN$, the same proportion after discussion number $t$ and the
adoption of the local majority has taken place. Then the dynamics
of the Galam model can be analysed by determining the long term behaviour
of the proportion $p_{t}$, more precisely whether $\lim_{t\rightarrow\infty}p_{t}$
exists and if it does what its value is. The same mathematical formalism
can be used to study the different but related scenario of binary
voting in a hierarchy of individuals comprising several levels. Each
level features groups of $r$ individuals who elect a representative
for the next higher level who votes for the majority opinion in the
group. Instead of repeated discussions taking place with the population
subdivided into several small groups, here we determine which alternative
wins, i.e. the winning opinion at the top of the hierarchy. So instead
of the time dimension, we have a certain number of levels $n\in\IN$
in the hierarchy. Fixing $r$ and $n$ implies the number of voters
in the population is $r^{n}$. Similarly to the dynamics of $p_{t}$
in the opinion model, here we have $p_{k}$ for the proportion of
voters on level $k\in\left\{ 1,\ldots,n\right\} $ of the hierarchy
in favour of $A$, and we are interested in the behaviour of $p_{n}$.

The main distinction concerning the long term dynamics is whether
polarisation takes place, i.e. whether there is a tendency to a tie,
leaving the society in a state of conflict, or whether there is a
tendency towards a macroscopic majority in favour of one of the alternatives.
These long term dynamics are easy to determine for the Galam model:
suppose there is an initial majority in favour of one of the alternatives.
Then the long term tendency is towards a unanimous majority in favour
of the same alternative. Convergence is faster, the larger the size
$r$ of the discussion groups is.

To make the model produce qualitatively different behaviour, a number
of generalisations have been proposed. Using different versions of
this model, attempts have been made to predict the outcome of elections
and referenda in the works \cite{Lehir,Gal2017b,Gal2018}. One of
the generalisations of the basic model is the introduction of a tendency
to contrarianism, which manifests as the possibility of rejecting
the local majority in favour of the contrary alternative. These models
were first studied in \cite{Gal2002,Gal2004}. Contrarian tendencies
have been studied extensively by other authors using different models,
see e.g.\! \cite{TM2013,BR2015,GC2017,LHL2017,Ham2018}.

In this tradition of generalisations of the Galam model of opinion
dynamics, the so called local flip model with flips against the majority
where the likelihood of the flip depends on the magnitude of the majority
was introduced in \cite{TG2022}. This model allows us to study scenarios
where individuals are guided by the opinions of the others in their
local groups -- albeit not necessarily in a positive sense. It could
be the case that small majorities are not particularly persuasive:
individuals may not accept such close outcomes as sufficiently authoritative.
Or it could be that particularly large, close to unanimous, majorities
are distasteful to some individuals and induce resistance to the majority
decision. These two scenarios were called the `vertical' and `horizontal
frame', respectively, in \cite{TG2022}. These are aspects which could
not be explored using previous models of binary opinion dynamics,
and they shed some new light on several social phenomena triggered
by one or a few individuals acting against larger local majorities.

Let $r\in\IN$ be the size of the local discussion groups. Given a
current proportion $p\in\left[0,1\right]$ of the overall population
favouring alternative $A$, the proportion in favour after the next
round of discussions is given by the update function. The general
update function of the local flip model is 
\begin{equation}
R_{r,\mathbf{a}}\left(p\right)=\sum_{i=\frac{r+1}{2}}^{r}\left(\begin{array}{c}
r\\
i
\end{array}\right)\left[\left(1-a_{i}\right)p^{i}\left(1-p\right)^{r-i}+a_{i}p^{r-i}\left(1-p\right)^{i}\right].\label{eq:flip_voting}
\end{equation}
The vector $\mathbf{a}=\left(a_{\frac{r+1}{2}},\ldots,a_{r}\right)$
contains all the flip parameters of the model. The parameter $a_{i}$
gives the probability that the majority is not adopted given that
the majority is of size $i\in\left\{ \frac{r+1}{2},\ldots,r\right\} $.
When $i$ voters are in favour of $A$, and we encode each of these
votes as a $+1$ and each vote for $B$ as a $-1$, the sum of all
votes, called the voting margin, is $S=i-\left(r-i\right)=2i-r$.
The model is symmetric in that a flip against the local majority is
equally likely no matter the alternative the majority favours. Thus,
the flip parameters are chosen in such a way that they only depend
on the absolute voting margin $\left|S\right|=\left|2i-r\right|$
but not the sign of $S$. This symmetry implies that the point $p=1/2$,
which signifies a tie between the two alternatives, is a universal
fixed point in the dynamics of the model. The local flip model is
the most general model of binary opinion dynamics with random deviations
from the majority that is symmetric with respect to the two alternatives.

To make the local flip model more accessible than the mere statement
of the general update function accomplishes, we describe the simplest
case where the local discussion groups are of size 3. Then the model
has only two flip parameters: $a_{2}$ is the probability that a flip
against the group majority takes place conditional on there being
a 2-1 majority, and $a_{3}$ is the probability of a flip if there
is a unanimous majority. Table \ref{tab:Flip-Voting-r=00003D00003D3}
describes the process by which the opinions of the members of a local
discussion group are updated after discussion ends. The information
is read as follows: in the first two rows of the table, we have a
unanimous configuration of opinions in favour of $A$ in the local
discussion group. The probability that a randomly selected local discussion
group of size 3 has the configuration $AAA$, given that there is
a proportion $p$ of the overall population in favour of $A$, is
$p^{3}$. In the basic model, the majority would be adopted. However,
in the local flip model, there is a probability that this unanimous
majority is not adopted given by $a_{3}\in\left[0,1\right]$. Thus,
we obtain a probability of a group vote in favour of $A$ given by
$\left(1-a_{3}\right)p^{3}$ and a probability of going for $B$ given
by the complementary $a_{3}p^{3}$. The updated overall proportion
of individuals in favour of $A$, $R_{3,\mathbf{a}}\left(p\right)$,
is obtained by summing the probabilities in the rows where $A$ is
adopted. 
\begin{table}
\centering{}
\global\long\def\arraystretch{1.8}%
\begin{tabular}{|c|c|c|}
\hline 
Configuration & Group vote & Probability\tabularnewline
\hline 
\hline 
$AAA$ & $A$ & $\left(1-a_{3}\right)p^{3}$\tabularnewline
\hline 
 & $B$ & $a_{3}p^{3}$\tabularnewline
\hline 
$AAB\,\cdot3$ & $A$ & $\left(1-a_{2}\right)\cdot3p^{2}\left(1-p\right)$\tabularnewline
\hline 
 & $B$ & $a_{2}\cdot3p^{2}\left(1-p\right)$\tabularnewline
\hline 
$ABB\,\cdot3$ & $A$ & $a_{2}\cdot3p\left(1-p\right)^{2}$\tabularnewline
\hline 
 & $B$ & $\left(1-a_{2}\right)\cdot3p\left(1-p\right)^{2}$\tabularnewline
\hline 
$BBB$ & $A$ & $a_{3}\left(1-p\right)^{3}$\tabularnewline
\hline 
 & $B$ & $\left(1-a_{3}\right)\left(1-p\right)^{3}$\tabularnewline
\hline 
\end{tabular}\caption{\label{tab:Flip-Voting-r=00003D00003D3}Local Flip Model, $r=3$}
\end{table}

The aforementioned basic model and the contrarian model introduced
in \cite{Gal2002,Gal2004} are both special cases of the local flip
model. The basic model corresponds to the assumption $\mathbf{a}=0$,
meaning there are no flips against the majority under any circumstances.
The contrarian model features flat flip probabilities $\mathbf{a}=\left(a,\ldots,a\right)$
for some $a\in\left[0,1\right]$. Both of these models, being special
cases of the local flip model, share the universal fixed point $p=1/2$.
The basic model also has the fixed points 0 and 1. These features
lead to the convergence to unanimous majorities in the basic model
described above.

Instead of assuming some fixed vector of flip parameters $\mathbf{a}$
as in \cite{TG2022} and analysing the properties of the model under
that assumption, we investigate in this article how likely some of
the features of the model are when the parameters are randomly selected.
Therefore, in this article, $\mathbf{a}$ is a random vector that
follows some probability distribution. As a consequence, features
of the model, such as the stability parameter 
\begin{equation}
\lambda_{r}\left(\mathbf{a}\right):=R'_{r,\mathbf{a}}\left(1/2\right)=\frac{1}{2^{r-1}}\sum_{i=\frac{r+1}{2}}^{r}\left(\begin{array}{c}
r\\
i
\end{array}\right)\left(2i-r\right)\left(1-2a_{i}\right)\label{eq:Lambda_r}
\end{equation}
of the universal fixed point 1/2, will also be random variables. Other
features, such as the presence of unanimous attractors will be events
of the probability space on which $\mathbf{a}$ is defined. Hence,
it makes sense to ask ourselves the question of how likely these events
are, or how likely it is that the universal fixed point 1/2 is stable.

There are two possible interpretations of random flip parameters: 
\begin{enumerate}
\item Different issues to be discussed induce tendencies to adopt the majority
opinion or to resist it. In some cases, there may be a strong tendency
to align with the majority, and the flip parameters will be close
to 0. Other issues may trigger strong resistance against the majority
opinion, e.g.\! if the members of the majority are perceived to be
very arrogant or intolerant of divergent opinions. We consider that
for each possible issue there is some realisation of the flip parameters
$\mathbf{a}$ that affects the dynamics of the model. Thus, if we
study the local flip model with random flip parameters, we gain a
big picture understanding of how discussions take place and opinions
are shaped averaged over a large number of possible discussion topics. 
\item We can also consider the mathematical question of what the `typical
model' looks like. We know that the model exhibits a variety of features,
such as different numbers of fixed points with differing stability
properties, depending on the flip parameters, but how likely are these
different properties to occur? It may turn out that some features
-- although possible for special values of the flip parameters --
are exceedingly unlikely to occur under random flip parameters. In
that case, these features are more of theoretical than practical interest.
To make this distinction is especially important given the great generality
of the local flip model, which leads to an `anything goes' situation. 
\end{enumerate}
This article is organised in four sections and an appendix. After
this introduction, Section \ref{sec:Small} analyses features of the
local flip model with random flip parameters when the local discussion
groups are small. This setting allows us to assume a specific distribution
of the flip parameters and calculate the probability that the model
has certain characteristics explicitly. Afterwards, in Section \ref{sec:Large},
we analyse properties of the local flip model with random flip parameters
when the discussion groups are large. Section \ref{sec:Conclusion}
presents the conclusions we have reached. Finally, the Appendix contains
the proofs of the theorems presented in the article.

\section{\label{sec:Small}Small Local Discussion Groups}

In this section, we analyse properties of the versions of the local
flip model discussed in \cite{TG2022} under the assumption that the
flip parameters of the model are independent and uniformly distributed
on the interval $\left[0,1\right]$. We will denote the uniform distribution
as $\mathcal{U}$ followed by the support of the distribution, e.g.\!
$\left[0,1\right]$. Some versions of the model treated in \cite{TG2022}
involve restricted parameters, i.e.\! some of the flip parameters
are set equal to 0. The corresponding assumption under random flip
parameters is a Dirac distribution\footnote{For any $x\in\IR$, the Dirac distribution or point mass at $x$ is
defined as $\delta_{x}A:=\begin{cases}
1, & \textup{if }x\in A\\
0, & \textup{if }x\notin A
\end{cases}$ for all subsets $A$ of $\IR$.} at 0 (written as $\delta_{0}$) of those parameters which are 0.

The local discussion groups in this section are of size 3 or 5. This
is applicable to settings such as friend groups or colleagues at work.
In this setting, no matter how large the overall population is, the
discussion groups remain small and it is the number of discussion
groups that increases with the overall population. In Section \ref{sec:Large},
we will analyse the case of large discussion groups.

\subsection{Group Size $r=3$}

We first look at the model with local discussion groups of size $r=3$.
See Table \ref{tab:Flip-Voting-r=00003D00003D3} and Section 4.1 of
\cite{TG2022} for an analysis of this model. This model features
two flip parameters: we have $a_{2}$ which is the flip parameter
when there is a 2-1 majority in favour of one of the alternatives;
equivalently, the absolute voting margin $\left|S\right|$ is equal
to 1. The other flip parameter, $a_{3}$, describes the probability
of disregarding a unanimous 3-0 majority. We will use the notation
$a_{2}=a$ and $a_{3}=b$ as in \cite{TG2022} and assume a uniform
distribution, i.e.\! the random vector $\left(a,b\right)$ is uniformly
distributed on the set $\left[0,1\right]^{2}$, which we write as
$\left(a,b\right)\sim\mathcal{U}\left[0,1\right]^{2}$.

In order to determine the behaviour of the model for different values
of the flip parameters $a$ and $b$, the stability of the universal
fixed point 1/2 is calculated. There are four different regions in
the parameter space of this model $\left[0,1\right]^{2}$ with differing
stability properties: in the region $\mathbf{L}$ given by the inequality
$b<1/3-a$, 1/2 is unstable and the dynamics are monotonically away
from 1/2. The stable region $\mathbf{M}_{1}$ with monotonic dynamics
is given by the inequalities $1/3-a<b<1-a$. The other stable region
$\mathbf{M}_{2}$ with alternating dynamics is delimited by the inequalities
$1-a<b<5/3-a$. Finally, there is an unstable region $\mathbf{H}$
with alternating dynamics which lies above $b=5/3-a$. A graphical
representation of these results can be found in Figure \ref{fig:r=00003D00003D3_regimes}.

Having identified these regions, we can calculate the probability
that the fixed point 1/2 is stable or unstable: $\IP\left\{ 1/2\text{ is stable}\right\} =8/9$,
$\IP\left\{ 1/2\text{ is unstable}\right\} =1/9$. Under the uniform
distribution $\mathcal{U}\left[0,1\right]^{2}$, these probabilities
are simply the areas of the respective regions in the parameter space
$\left[0,1\right]^{2}$.

The other aspect analysed in \cite{TG2022} was the number of fixed
points. There are three regions: the region in which every value $p\in\left[0,1\right]$
is a fixed point consists of a single point, $\mathbf{F}_{\infty}:=\left\{ \left(1/3,0\right)\right\} $.
The region $\mathbf{F}_{1}$ with only a single fixed point which
is $1/2$ is given by the inequalities $1/3-a\leq b$ and $b>0$.
The region with three different fixed points is the complement $\mathbf{F}_{3}:=\left[0,1\right]^{2}\backslash\left(\mathbf{F}_{\infty}\cup\mathbf{F}_{1}\right)$,
i.e.\! the corner region around the origin and the $a$-axis excluding
the point: $\left(1/3,0\right)$.

Thus, the probabilities of the number of fixed points taking the values
$1,3,$ and $\infty$ are, respectively,\\
 $\IP\left\{ \textup{there is 1 fixed point}\right\} =17/18$, $\IP\left\{ \textup{there are 3 fixed points}\right\} =1/18$,
and\\
 $\IP\left\{ \textup{there are infinitely many fixed points}\right\} =0$.

As shown in the appendix A.2 of \cite{TG2022}, for any group size
$r$ the unanimous majority points 0 and 1 are fixed points if and
only if $a_{r}=0$. Furthermore, in case 0 and 1 are fixed points,
they are stable if and only if $a_{r-1}\leq1/r$. The event of having
unanimous attractors for group size $r=3$, i.e.\! the points 0 and
1 being fixed points and attractive, has probability 0: $\IP\left\{ \text{unanimous attractors}\right\} \leq\IP\left\{ a_{r}=0\right\} =0$.
\begin{figure}
\centering{}\includegraphics{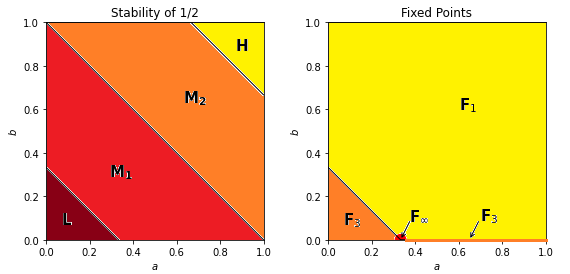}\caption{\label{fig:r=00003D00003D3_regimes}Regimes of the Local Flip Model,
$r=3$}
\end{figure}

\subsection{Group Size $r=5$}

We consider the same models as in Section 4.2 of \cite{TG2022} with
the same notation: $a_{3}=a$, $a_{4}=b$, and $a_{5}=c$. Each of
these models has some constraint on the possible values of the flip
parameters. Similarly to the discussion group size $r=3$, we can
conduct an analysis of the model's behaviour. See Section 4.2 of \cite{TG2022}
for the precise results, which, due to length constraints on this
article, we choose to omit. We present a graphical representation
of the version of the model with $c=0$, i.e.\! there are no flips
against unanimous majorities, in Figure \ref{fig:r=00003D00003D5_regimes}.

\begin{figure}
\centering{}\includegraphics{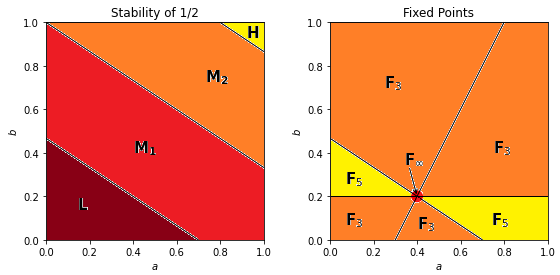}\caption{\label{fig:r=00003D00003D5_regimes}Regimes of the Local Flip Model,
$r=5$, $c=0$}
\end{figure}

The probabilities of a number of properties of four versions of local
flip models analysed in Section 4.2 of \cite{TG2022} can be found
in Table \ref{tab:Models-r=00003D00003D5}. These are exact figures
with no rounding applied. Due to the uniform distribution we have
assumed for the flip parameters, the probabilities presented in Table
\ref{tab:Models-r=00003D00003D5} are the areas of the corresponding
regions in the parameter space of the respective version of the model.
E.g., for the model with $\left(a,b\right)\sim\mathcal{U}\left[0,1\right]^{2},c=0$
presented in the last column in Table \ref{tab:Models-r=00003D00003D5},
the probabilities can be calculated directly from the diagrams in
Figure \ref{fig:r=00003D00003D5_regimes}.

\begin{table}
\begin{centering}
\global\long\def\arraystretch{1.5}%
\begin{tabular}{|c|c|c|c|c|}
\hline 
 & $a\sim\mathcal{U}\left[0,1\right],$  & $b\sim\mathcal{U}\left[0,1\right],$  & $c\sim\mathcal{U}\left[0,1\right],$  & $\left(a,b\right)\sim\mathcal{U}\left[0,1\right]^{2},$\tabularnewline
 & $b=c=0$  & $a=c=0$  & $a=b=0$  & $c=0$\tabularnewline
\hline 
\hline 
$\IP\left\{ 1/2\text{ is stable}\right\} $  & 3/10  & 8/15  & 0  & 247/300\tabularnewline
\hline 
$\IP\left\{ 1/2\text{ is unstable}\right\} $  & 7/10  & 7/15  & 1  & 53/300\tabularnewline
\hline 
$\IP\left\{ \textup{3 fixed points}\right\} $  & 7/10  & 11/15  & 1  & 1/12\tabularnewline
\hline 
$\IP\left\{ \textup{5 fixed points}\right\} $  & 3/10  & 4/15  & 0  & 11/12\tabularnewline
\hline 
$\IP\left\{ \text{unanimous attractors}\right\} $  & 1  & 1/5  & 0  & 17/100\tabularnewline
\hline 
\end{tabular}
\par\end{centering}
\caption{\label{tab:Models-r=00003D00003D5}Models with Discussion Groups of
Size $r=5$}
\end{table}

As we see, under uniform distribution of the flip parameters, there
generally is a small number of fixed points. Specifically, the probability
that a full complement of $r$ fixed points exists is small for both
$r=3$ and $r=5$, with the one exception of the distribution $\left(a,b\right)\sim\mathcal{U}\left[0,1\right]^{2},c=0$.
This exception is reached by setting the flip parameter for unanimous
majorities to the only value that allows for unanimous attractors.
This is thus not a typical result. In line with the generally small
number of fixed points, the unanimous majorities are rarely attractive
fixed points. This is thus a stable result even under randomly selected
flip parameters which clearly distinguishes the local flip model from
the basic Galam model of opinion dynamics, which features unanimous
attractors for any local discussion group size.

As for the universal fixed point 1/2, we observe mostly stable fixed
points. However, there are some exceptions when the right set of constraints
is imposed on the flip parameters. For $r=3$ and unrestricted flip
parameters, we have a high probability of a stable fixed point 1/2.
Similarly, for $r=5$ and $\left(a,b\right)\sim\mathcal{U}\left[0,1\right]^{2},c=0$,
we have a stable fixed point. On the other hand, if we set both $b$
and $c$ equal to 0, i.e.\! we only allow flips when the majority
is paper thin, an unstable fixed point 1/2 is more likely than a stable
one.

We could try using a different distribution for the flip parameters
to illustrate that these conclusions depend on the particular distribution
chosen. Due to the somewhat arbitrary nature of such an undertaking,
we choose instead to investigate the case of large update group sizes,
which allows more robust conclusions which do not require assuming
specific distributions to obtain results.

\section{\label{sec:Large}Large Local Discussion Groups}

In this section, we present results concerning mainly large local
discussion groups, which is understood to be the limit as $r$ goes
to infinity, although some of the following results hold for all values
of $r$. Contrary to the last section, which was about small discussion
groups, such as friend or work groups which are typically in the single
digits, here we are concerned with structures such as social media,
where large groups of individuals discuss topics -- frequently in
a very controversial manner. Another scenario would be identifying
the discussion groups with the states of a federal republic. As the
overall population becomes very large, we have either a bounded number
of local discussion groups or else the number of discussion groups
also grows without bound but more slowly than the overall population.

\subsection{Unanimous Attractors}

For any $r$, 0 and 1 are fixed points if and only if $a_{r}=0$ (see
appendix A2 of \cite{TG2022}). So $\IP\left\{ 0,1\textup{ fixed}\right\} =\IP\left\{ a_{r}=0\right\} $
depends only on the marginal distribution of the single parameter
$a_{r}$. How likely is it that 0 and 1 are fixed points and they
are attractors? This probability is given by 
\[
\IP\left\{ 0,1\textup{ stable and }0,1\textup{ fixed}\right\} =\IP\left\{ a_{r-1}\leq1/r\textup{ and }a_{r}=0\right\} .
\]

We consider two cases which represent the extremes among the possible
distributions of $\mathbf{a}$: 
\begin{enumerate}
\item $a_{i}$ are stochastically independent. Then 
\[
\IP\left\{ 0,1\textup{ stable and }0,1\textup{ fixed}\right\} =\IP\left\{ a_{r-1}\leq1/r\right\} \IP\left\{ a_{r}=0\right\} .
\]
\item All $a_{i}$ are equal almost surely. This is the case of the contrarian
model with a flat flip probability $a$. Then 
\[
\IP\left\{ 0,1\textup{ stable and }0,1\textup{ fixed}\right\} =\IP\left\{ a=0\right\} .
\]
\end{enumerate}
We conclude that unanimous attractors, which is a feature of the basic
model of opinion dynamics with no local flips, is relatively unlikely
to occur in both of these extreme scenarios. To find a stochastic
model for the flip parameters which makes unanimous attractors more
likely, we would have to look for some joint distribution that lies
somewhere in between the two extremes, tailoring it specifically to
achieve the desired outcome.

\subsection{Stability of the Fixed Point $p=1/2$}

\begin{figure}
\centering{}\includegraphics[scale=0.6]{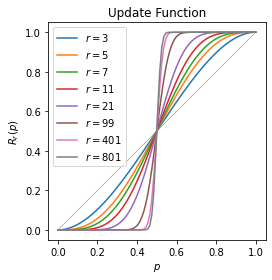}\caption{\label{fig:Update-Function}Update Function of the Basic Model}
\end{figure}

In Figure \ref{fig:Update-Function}, we see the update function (\ref{eq:flip_voting})
of the basic model, i.e.\! when $\mathbf{a}=0$. The stability parameter
$\lambda_{r}\left(0\right)$ of the universal fixed point defined
in (\ref{eq:Lambda_r}) is the slope of each curve at $p=1/2$. The
diagram illustrates that as the size of the local discussion group
increases, so does $\lambda_{r}\left(0\right)$. It was proved in
\cite{Gal2017} (see p. 27), that as $r$ goes to infinity, $\lambda_{r}\left(0\right)$
behaves asymptotically like the expression $\sqrt{\frac{2r}{\pi}}$.
So the fixed point 1/2 becomes more and more unstable, with the dynamics
being monotonic for any value of $r$, meaning that if the initial
proportion $p_{0}$ of the overall population in favour of $A$ starts
close by 1/2 -- but not exactly at a tie -- then for large local
discussion groups, the dynamics tend away from a tie to a unanimous
majority in favour of the initially favoured alternative, and convergence
is faster the larger $r$ is.

We start our investigation of the stability of the fixed point 1/2
by considering the two cases treated in the section about unanimous
attractors: independent flip parameters and flat flip parameters.
Afterwards, we investigate what happens in between these two extremes
using a model of ferromagnetism that features different degrees of
correlation between binary random variables.

\subsubsection{\label{subsec:Independent}Independent $a_{i}$}

Let $\left(a_{i}\right)_{i\in\IN}$ be an infinite sequence of stochastically
independent and identically distributed random variables. To avoid
dealing with the trivial case that there is no randomness in the model,
which is the case if the variance $\IV a_{i}$ of each flip parameter
is 0, we exclude this possibility by assuming $\IV a_{i}>0$ for the
rest of the article.

The key property of the flip parameters is their expectation $\IE a_{i}$.
As the flip parameters are bounded $\left[0,1\right]$-valued random
variables, this expectation always exists. If for each $i$ the expectation
is different than 1/2, then the sign of $\IE\lambda_{r}\left(\mathbf{a}\right)$
is given by the sign of $1-2\IE a_{i}$. However, this in of itself
is of limited usefulness. The entire distribution of the stability
parameter matters. Let us start with $\IV\lambda_{r}\left(\mathbf{a}\right)$,
the variance of $\lambda_{r}\left(\mathbf{a}\right)$. Its typical
magnitude will turn out to be smaller than $r$. This implies that
the stability parameter under the present assumptions scales more
slowly as $r$ goes to infinity compared to the deterministic basic
model. We have the following limit theorem from which the previous
claim follows. In the statement of the limit theorems in this article,
the symbol $\xrightarrow[r\rightarrow\infty]{\textup{d}}$ stands
for convergence in distribution as $r$ goes to infinity. For the
remainder of this article, we will be using the usual notation for
asymptotic expressions\footnote{Let $\left(a_{n}\right)_{n\in\IN}$ and $\left(b_{n}\right)_{n\in\IN}$
be sequences in the real numbers. Then we will write 
\begin{align*}
a_{n} & =o\left(b_{n}\right)\quad\textup{if}\quad\lim_{n\rightarrow\infty}\frac{a_{n}}{b_{n}}=0,\\
a_{n} & =\Theta\left(b_{n}\right)\quad\textup{if}\quad\liminf_{n\rightarrow\infty}\left|\frac{a_{n}}{b_{n}}\right|>0\quad\textup{and}\quad\limsup_{n\rightarrow\infty}\left|\frac{a_{n}}{b_{n}}\right|<\infty,\\
a_{n} & \approx b_{n}\quad\textup{if}\quad\lim_{n\rightarrow\infty}\frac{a_{n}}{b_{n}}=1.
\end{align*}
Informally, $a_{n}=o\left(b_{n}\right)$ means $a_{n}$ is `asymptotically
smaller' than $b_{n}$, and $a_{n}=\Theta\left(b_{n}\right)$ means
$a_{n}$ and $b_{n}$ are `asymptotically of the same order'. $a_{n}\approx b_{n}$
is a stronger condition than being of the same order. It means we
can asymptotically substitute $b_{n}$ for $a_{n}$.}. 
\begin{thm}
\label{thm:iid_neq_1/2}Let $\left(a_{i}\right)_{i\in\IN}$ be a sequence
of independent and identically distributed random variables with support
in $\left[0,1\right]$ and $\IE a_{i}\neq1/2$ for each $i\in\IN$.
Then the sequence of random variables $\left(\lambda_{r}\left(\mathbf{a}\right)\right)$
defined in (\ref{eq:Lambda_r}) and normalised by $\sqrt{r}$ converges
in distribution. More specifically, 
\[
\frac{\lambda_{r}\left(\mathbf{a}\right)}{\sqrt{r}}\xrightarrow[r\rightarrow\infty]{\textup{d}}\delta_{\sqrt{\frac{2}{\pi}}\left(1-2\IE a_{1}\right)}
\]
and $\IV\lambda_{r}\left(\mathbf{a}\right)=\Theta\left(\sqrt{r}\right)$. 
\end{thm}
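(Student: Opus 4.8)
The plan is to exhibit $\lambda_{r}(\mathbf{a})$ as an affine functional of the independent flip parameters, extract its mean and variance in closed form, and then deduce convergence in distribution to a point mass from the fact that the $\sqrt{r}$-rescaled variance tends to $0$; the variance asymptotics then give the $\Theta(\sqrt{r})$ claim as a by-product. \textbf{Step 1 (affine reformulation).} Write $w_{i}:=\binom{r}{i}(2i-r)/2^{r-1}$ for $i=\frac{r+1}{2},\dots,r$, so that by (\ref{eq:Lambda_r})
\[
\lambda_{r}(\mathbf{a})=\sum_{i}w_{i}(1-2a_{i})=\lambda_{r}(0)-2\sum_{i}w_{i}a_{i},\qquad\lambda_{r}(0)=\sum_{i}w_{i}.
\]
The absorption identity $\binom{r}{i}(2i-r)=r\left(\binom{r-1}{i-1}-\binom{r-1}{i}\right)$ makes $\sum_{i}w_{i}$ telescope to $\lambda_{r}(0)=\frac{r}{2^{r-1}}\binom{r-1}{(r-1)/2}$, which by Stirling satisfies $\lambda_{r}(0)\approx\sqrt{2r/\pi}$ (the asymptotics attributed to \cite{Gal2017}). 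Since the $a_{i}$ are i.i.d., $\IE\lambda_{r}(\mathbf{a})=\lambda_{r}(0)(1-2\IE a_{1})$, and by independence $\IV\lambda_{r}(\mathbf{a})=4\,\IV a_{1}\sum_{i}w_{i}^{2}$.

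\textbf{Step 2 (variance asymptotics).} The heart of the matter is $\sum_{i}w_{i}^{2}$. From $i\binom{r}{i}=r\binom{r-1}{i-1}$ and $i^{2}\binom{r}{i}=r(r-1)\binom{r-2}{i-2}+r\binom{r-1}{i-1}$ together with Vandermonde's convolution one obtains $\sum_{i=0}^{r}i\binom{r}{i}^{2}=\frac{r}{2}\binom{2r}{r}$ and $\sum_{i=0}^{r}i^{2}\binom{r}{i}^{2}=r(r-1)\binom{2r-2}{r-2}+r\binom{2r-1}{r-1}$; expanding $(2i-r)^{2}$ and simplifying collapses everything to the clean identity
\[
\sum_{i=0}^{r}\binom{r}{i}^{2}(2i-r)^{2}=\frac{r^{2}}{2r-1}\binom{2r}{r}.
\]
The summand is invariant under $i\mapsto r-i$ (and vanishes at the centre), so the half-sum over $i\geq\frac{r+1}{2}$ is exactly half of this, whence $\sum_{i}w_{i}^{2}=\frac{2r^{2}}{(2r-1)4^{r}}\binom{2r}{r}$. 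Stirling's formula $\binom{2r}{r}\approx4^{r}/\sqrt{\pi r}$ then gives $\sum_{i}w_{i}^{2}\approx\sqrt{r/\pi}$, so $\IV\lambda_{r}(\mathbf{a})=4\,\IV a_{1}\sum_{i}w_{i}^{2}\approx 4\,\IV a_{1}\sqrt{r/\pi}$. Because $\IV a_{1}>0$ by assumption, this is $\Theta(\sqrt{r})$, which is the second assertion.

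\textbf{Step 3 (taking the limit).} Put $c:=\sqrt{2/\pi}\,(1-2\IE a_{1})$. By Step 1, $\IE[\lambda_{r}(\mathbf{a})/\sqrt{r}]=\lambda_{r}(0)(1-2\IE a_{1})/\sqrt{r}\to c$, and by Step 2, $\IV[\lambda_{r}(\mathbf{a})/\sqrt{r}]=\IV\lambda_{r}(\mathbf{a})/r=\Theta(r^{-1/2})\to0$. Therefore
\[
\IE\!\left[\left(\frac{\lambda_{r}(\mathbf{a})}{\sqrt{r}}-c\right)^{2}\right]=\IV\!\left[\frac{\lambda_{r}(\mathbf{a})}{\sqrt{r}}\right]+\left(\IE\!\left[\frac{\lambda_{r}(\mathbf{a})}{\sqrt{r}}\right]-c\right)^{2}\longrightarrow0,
\]
so $\lambda_{r}(\mathbf{a})/\sqrt{r}\to c$ in $L^{2}$, a fortiori in probability, and hence in distribution to the Dirac mass $\delta_{c}$ --- exactly the stated limit.

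\textbf{The main obstacle.} Everything downstream of the closed form for $\sum_{i}\binom{r}{i}^{2}(2i-r)^{2}$ is routine bookkeeping; the one genuinely computational point is establishing that identity (or, failing an exact formula, producing matching $\Theta(\sqrt{r})$ upper and lower bounds for $\sum_{i}w_{i}^{2}$ via a local-CLT/Laplace estimate of the binomial profile near its peak). The two hypotheses enter precisely where one expects: $\IV a_{1}>0$ upgrades the $O(\sqrt{r})$ bound on $\IV\lambda_{r}(\mathbf{a})$ to $\Theta(\sqrt{r})$, while $\IE a_{1}\neq1/2$ is what makes $\sqrt{r}$ the correct normalisation --- when $\IE a_{1}=1/2$ the mean term disappears, the fluctuations of $\lambda_{r}(\mathbf{a})$ are of order $r^{1/4}$, and one lands in a genuine central-limit regime treated separately.
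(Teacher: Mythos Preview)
Your proof is correct and takes a genuinely more elementary route than the paper's. The paper proves the result by invoking the Feller--L\'evy central limit theorem for triangular arrays, checking four conditions (null array, vanishing tails, convergence of truncated means, convergence of truncated variances) and concluding that the limit is $\mathcal{N}(b,0)=\delta_{b}$. To do so it builds up a sequence of asymptotic lemmas---a de Moivre--Laplace estimate for the individual coefficients $b_{ri}$, the location and size of $\max_{i}b_{ri}$, and the order $\Theta(\sqrt{r})$ of $\sum_{i}b_{ri}^{k}$ for \emph{every} $k\in\IN$---none of which gives an exact value. You instead compute $\sum_{i}w_{i}^{2}$ in closed form via the Vandermonde-based identity $\sum_{i=0}^{r}\binom{r}{i}^{2}(2i-r)^{2}=\frac{r^{2}}{2r-1}\binom{2r}{r}$, and then the whole theorem collapses to a Chebyshev/$L^{2}$ argument: the rescaled mean converges and the rescaled variance vanishes, so the limit is a Dirac mass. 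The advantage of your approach is that it is shorter, self-contained, and yields the sharper asymptotic $\IV\lambda_{r}(\mathbf{a})\approx 4\,\IV a_{1}\sqrt{r/\pi}$ rather than just $\Theta(\sqrt{r})$. The advantage of the paper's approach is reusability: the lemmas on $\sum_{i}b_{ri}^{k}$ for general $k$ are exactly what is needed in the proofs of Theorems~\ref{thm:iid_eq_1/2}, \ref{thm:CWM} and \ref{thm:CWM_ext} (where the method of moments requires all powers), so the heavier machinery is amortised across the whole paper. A small remark: your argument never actually uses the hypothesis $\IE a_{1}\neq 1/2$---it goes through verbatim when $\IE a_{1}=1/2$, giving $\delta_{0}$---which you correctly flag in your final paragraph as the case where the $\sqrt{r}$ normalisation becomes degenerate and a finer scaling is needed.
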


On the other hand, if the expectations $\IE a_{i}$ are equal to 1/2,
then $\lambda_{r}\left(\mathbf{a}\right)$ is a centred random variable.
The fixed point 1/2 is therefore `on average stable'. However, this
does not imply that a stable fixed point is typical behaviour for
this model under these assumptions. A better description of the typical
behaviour is given by its variance, which is of order $\Theta\left(\sqrt{r}\right)$.
The formal result is the following limit theorem. Below and in the
rest of this article, $\mathcal{N}\left(\mu,\sigma^{2}\right)$ stands
for the normal distribution with expectation $\mu\in\IR$ and variance
$\sigma^{2}\geq0$. The special case $\sigma=0$ can be identified
with $\delta_{\mu}$. 
\begin{thm}
\label{thm:iid_eq_1/2}Let $\left(a_{i}\right)_{i\in\IN}$ be a sequence
of independent and identically distributed random variables with support
in $\left[0,1\right]$ and $\IE a_{i}=1/2$ for each $i\in\IN$. We
define the constants 
\[
b_{ri}:=\left(\begin{array}{c}
r\\
i
\end{array}\right)\frac{2i-r}{2^{r-1}}
\]
for all $r$ and all $i\in\left\{ \frac{r+1}{2},\ldots,r\right\} $
and the sequence $s_{r}:=\sqrt{\sum_{i=\frac{r+1}{2}}^{r}b_{ri}^{2}}$.
Asymptotically, $s_{r}$ is of order $\Theta\left(r^{1/4}\right)$.
The sequence of random variables $\left(\lambda_{r}\left(\mathbf{a}\right)\right)$
defined in (\ref{eq:Lambda_r}) and normalised by $s_{r}$ converges
in distribution to a centred normal distribution. Specifically, 
\[
\frac{\lambda_{r}\left(\mathbf{a}\right)}{s_{r}}\xrightarrow[r\rightarrow\infty]{\textup{d}}\mathcal{N}\left(0,4\IV a_{1}\right).
\]
Also, $\IV\lambda_{r}\left(\mathbf{a}\right)=\Theta\left(\sqrt{r}\right)$. 
\end{thm}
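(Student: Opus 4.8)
The plan is to express the stability parameter as a sum of independent random variables and then apply a triangular-array central limit theorem, with the real work going into the asymptotics of the normalising constants. Directly from (\ref{eq:Lambda_r}) and the definition of $b_{ri}$ one has $\lambda_r(\mathbf a)=\sum_{i=(r+1)/2}^{r}b_{ri}(1-2a_i)$. Setting $X_i:=1-2a_i$, these are independent, $[-1,1]$-valued, centred (since $\IE a_i=1/2$), with common variance $\IV X_i=4\IV a_1$. Hence $\IE\lambda_r(\mathbf a)=0$ and $\IV\lambda_r(\mathbf a)=4\IV a_1\sum_i b_{ri}^2=4\IV a_1\,s_r^2$, so both the claimed order of the variance and the limit law reduce to (i) determining the order of $s_r^2$ and (ii) a negligibility condition on the coefficients $b_{ri}/s_r$.

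For (i) I would first establish the closed form $\sum_{i=0}^{r}\binom{r}{i}^2(2i-r)^2=\frac{r^2}{2r-1}\binom{2r}{r}$. The cleanest derivation recognises $\binom{r}{i}^2/\binom{2r}{r}$ as the probability mass function of a hypergeometric variable $W$ -- the number of white balls when $r$ balls are drawn from an urn containing $r$ white and $r$ black -- whose variance equals $r^2/(4(2r-1))$; since $\IE W=r/2$, multiplying $\IE(W-r/2)^2$ by $4\binom{2r}{r}$ gives the identity. Using $\binom{r}{i}=\binom{r}{r-i}$ together with $r$ odd (so there is no central term), $s_r^2=2^{-(2r-1)}\sum_{i=(r+1)/2}^{r}\binom{r}{i}^2(2i-r)^2=\frac{r^2\binom{2r}{r}}{2^{2r-1}(2r-1)}$, and Stirling's estimate $\binom{2r}{r}\approx 4^r/\sqrt{\pi r}$ gives $s_r^2\approx\sqrt{r/\pi}$. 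Thus $s_r=\Theta(r^{1/4})$ and, since $\IV a_1>0$ by assumption, $\IV\lambda_r(\mathbf a)=4\IV a_1\,s_r^2=\Theta(\sqrt r)$.

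For the limit law, write $\lambda_r(\mathbf a)/s_r=\sum_{i=(r+1)/2}^{r}(b_{ri}/s_r)X_i$, a row of a triangular array of independent, centred, uniformly bounded summands whose row variance is $\sum_i(b_{ri}/s_r)^2\,\IV X_i=4\IV a_1$ for every $r$. By the Lindeberg--Feller theorem it therefore suffices to check uniform asymptotic negligibility, i.e.\ $\max_i|b_{ri}|/s_r\to 0$; because $|X_i|\le 1$, this makes the Lindeberg sum vanish identically for all large $r$, and we conclude $\lambda_r(\mathbf a)/s_r\xrightarrow[r\rightarrow\infty]{\textup{d}}\mathcal N(0,4\IV a_1)$.

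The one genuinely technical point -- and the step I expect to be the main obstacle -- is showing $\max_i|b_{ri}|=o(s_r)=o(r^{1/4})$; the naive bound $|2i-r|\le r$ together with $\binom{r}{i}2^{-r}=O(r^{-1/2})$ only yields $O(\sqrt r)$, which exceeds $s_r$. Using $i\binom{r}{i}=r\binom{r-1}{i-1}$ and $(r-i)\binom{r}{i}=r\binom{r-1}{i}$ one rewrites $b_{ri}=\frac{r}{2^{r-1}}\bigl(\binom{r-1}{i-1}-\binom{r-1}{i}\bigr)$, so $\max_i|b_{ri}|$ is $\frac{r}{2^{r-1}}$ times the largest gap between consecutive binomial coefficients of order $r-1$. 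A local (de Moivre--Laplace) estimate of the form $\binom{m}{k}2^{-m}\le C\,m^{-1/2}\exp(-(2k-m)^2/(2m))$ -- in which the $m^{-1/2}$ prefactor is essential -- shows that with $t=i-\tfrac r2$ the bound behaves like $t\,\exp(-2t^2/r)$, maximised at $t\asymp\sqrt r$, so the gap is $\Theta(2^{r-1}/r)$ and hence $\max_i|b_{ri}|=\Theta(1)$, which is $o(r^{1/4})$. In practice one need not pin down constants: splitting into $|2i-r|\le\sqrt r$ (handled by the bulk bound on $\binom{r}{i}$) and $|2i-r|>\sqrt r$ (handled by the local-limit tail bound) already gives $\max_i|b_{ri}|=O(1)$, which is all that is required. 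Everything else -- reducing to a sum of independent bounded variables and invoking Lindeberg--Feller -- is routine.
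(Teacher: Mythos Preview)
Your proof is correct and follows the same overall strategy as the paper: both write $\lambda_r(\mathbf a)/s_r$ as a triangular-array sum of independent, bounded, centred variables and obtain the CLT from a uniform asymptotic negligibility condition, which in turn reduces to showing $\max_i b_{ri}=O(1)$; both establish this maximum bound via the de Moivre--Laplace local limit theorem (the paper as Lemma~\ref{lem:bri_max}, you via the difference representation $b_{ri}=\tfrac{r}{2^{r-1}}\bigl(\binom{r-1}{i-1}-\binom{r-1}{i}\bigr)$). Where you genuinely diverge is in the computation of $s_r$. The paper proves only the order statement $\sum_i b_{ri}^2=\Theta(\sqrt r)$ by sandwiching (Lemma~\ref{lem:sum_bri^k}): the upper bound comes from $b_{ri}^k\le b_{ri}$ together with $\sum_i b_{ri}\approx\sqrt{2r/\pi}$, and the lower bound from counting indices $i$ with $i-\tfrac{r+1}{2}\asymp\sqrt r$, where each $b_{ri}$ is bounded below by a positive constant. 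Your hypergeometric-variance identity $\sum_{i=0}^r\binom{r}{i}^2(2i-r)^2=\tfrac{r^2}{2r-1}\binom{2r}{r}$ is sharper and more elementary: it yields the exact asymptotic $s_r^2\approx\sqrt{r/\pi}$ rather than merely the order, and needs no local-limit input at all for this step. The paper's route has the compensating advantage of giving $\sum_i b_{ri}^k=\Theta(\sqrt r)$ for every $k\ge 1$ in one stroke, which it reuses in the later method-of-moments arguments (Theorems~\ref{thm:CWM} and~\ref{thm:CWM_ext}); but for the present theorem your computation is the cleaner one.
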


We can summarise our findings regarding independent flip parameters
as follows: 
\begin{enumerate}
\item If for each $i$ $\IE a_{i}\neq1/2$, then we have a bias towards
either going along with the majority or rejecting it. The high flip
probability regime, given by $\IE a_{i}>1/2$, leads to a negative
stability parameter. This means there are alternating dynamics of
the majorities around the fixed point 1/2, repeatedly switching between
a majority in favour of alternative $A$ and a majority in favour
of $B$. If $\IE a_{i}<1/2$ for each $i$, then the dynamics around
the fixed point 1/2 are monotonic. The typical magnitude of the stability
parameter scales with $r$ and is of order $\Theta\left(\sqrt{r}\right)$.
This is the same magnitude as in the basic model with all flip parameters
equal to 0, but keep in mind that in the basic model, the sign of
the stability parameter is always positive, whereas here the sign
is determined by $\IE a_{i}$.\\
 Informally, we can say that the typical realisation of the stability
parameter can be expressed as $\lambda_{r}\left(\mathbf{a}\right)=\sqrt{\frac{2r}{\pi}}\left(1-2\IE a_{1}\right)\pm\Theta\left(r^{1/4}\right)$. 
\item If $\IE a_{i}=1/2$ for each $i$, the stability parameter is a centred
random variable with equal probabilities of being positive or negative.
Its fluctuations are described by a normal distribution whose variance
is proportional to the variance of each flip parameter, and their
magnitude is of order $\Theta\left(\sqrt{r}\right)$.\\
 Informally, the typical magnitude of the stability parameter is given
by the expression $\lambda_{r}\left(\mathbf{a}\right)=\pm\Theta\left(r^{1/4}\right)$. 
\end{enumerate}
In both cases, the fixed point 1/2 becomes unstable as the discussion
groups grow large, meaning there is a tendency towards majorities
in favour of one of the alternatives, albeit the dynamics can be alternating
which implies there is no convergence of the distribution of opinions
towards a limit. Therefore, independent and identically distributed
flip parameters do not give rise to a tendency towards ties in the
dynamics of the model. Under independence of the flip parameters,
we would have to relax the identical distribution assumption and tailor
the distributions of the $a_{i}$ to obtain a stable fixed point 1/2
and thus a tendency towards ties.

\subsubsection{\label{subsec:Flat-Flip-Probabilities}Flat Flip Probabilities}

The other extreme is when there is perfect positive correlation between
all flip parameters as in the contrarian model. Let $a$ be the random
variable which gives the value for each flip parameter. Then, as $r$
grows large, the probability that we observe a model with a stable
universal fixed point 1/2 is given by $\IP\left\{ a=1/2\right\} $,
as only a realisation of $a=1/2$ will make 1/2 stable. All other
values of the flip parameters will lead to a highly unstable fixed
point, either with alternating dynamics if $a>1/2$ or monotonic dynamics
if $a<1/2$. So we can explicitly calculate the probabilities of certain
properties of the contrarian model, similarly to the case of small
$r$ treated above.

\subsubsection{Correlated Flip Parameters}

In between the two extremes of stochastic independence (i.e.\! all
flip parameters vary in value without any correlation to each other)
and the case of perfect correlation in which all flip parameters assume
the same value, we want to analyse a range of distributions for positively
correlated flip parameters. We cover the range of very weakly correlated
to strongly correlated flip parameters. The latter we understand as
the situation where the typical sum $\sum_{i=\frac{r+1}{2}}^{r}a_{i}$
is of order $r$, or `macroscopic', whereas in the former case that
same sum would be $o\left(r\right)$ with high probability.

To obtain varying degrees of positive correlation between the flip
parameters, we employ the Curie-Weiss model (CWM) of ferromagnetism,
as it allows us to observe three different regimes of correlations
while also being amenable to analytic solutions at least asymptotically.
The CWM describes a set of elementary magnets (or `spins') that tend
to align with each other. The two possible states for each magnet
are encoded in the random variables as the values $-1$ and $+1$.
The CWM was first applied to the study of problems in the social sciences
in \cite{BD2001}. Since then, there have been several articles using
the CWM to explore some social or economic phenomenon (see e.g.\!
\cite{Ki2007,CGh2007,OEA2018,LSV2020,To2020phd,KT2021c}).

Let $n\in\IN$ be the number of random variables $\left(X_{1}^{(n)},\ldots,X_{n}^{(n)}\right)$.
The CWM is defined by the so called `canonical ensemble', the probability
of each configuration $x\in\left\{ -1,1\right\} ^{n}$: 
\begin{equation}
\IP\left(X_{1}^{(n)}=x_{1},\ldots,X_{n}^{(n)}=x_{n}\right)=Z^{-1}\exp\left(\frac{\beta}{2n}\left(\sum_{i=1}^{n}x_{i}\right)^{2}\right),\label{eq:CWM}
\end{equation}
where $\beta\geq0$ is the inverse temperature parameter and $Z$
is a normalisation constant. $\beta$ regulates the interaction strength
between the random variables. For positive $\beta$, the configurations
with the highest probability are $\left(1,\ldots,1\right)$ and $\left(-1,\ldots,-1\right)$.
On a technical note, contrary to the case of independent random variables
where we can assume that there is a single infinite sequence of independent
random variables, of which we are free to take the first $n$ variables,
there is no such infinite sequence of Curie-Weiss random variables.
Instead, we have a separate model with $n$ variables for each $n\in\IN$.
In order to emphasize this difference, we included the superindex
$(n)$ in the notation.

The CWM has three regimes of distinct behaviour: 
\begin{enumerate}
\item For $\beta<1$, the interaction between the $X_{i}^{(n)}$ is weak. 
\item $\beta=1$ is a critical point with its own distinct behaviour. 
\item For $\beta>1$, the interaction is strong. 
\end{enumerate}
Each of these descriptions can be made precise in terms of the limiting
distribution of the so called magnetisation $\sum_{i=1}^{n}X_{i}^{(n)}/n$
as $n$ goes to infinity. We note that for $\beta=0$ the variables
$X_{i}^{(n)}$ are independent. Thus, we have a special case of our
assumptions in Section \ref{subsec:Independent}.

Now re-index the random variables $X_{i}^{(n)}$ such that the index
$i$ runs from $\frac{r+1}{2}$ to $r$ instead of $1$ to $n$. We
will write $X_{i}^{(r)}$ for each $r$ and each $i$. As $1-2a_{i}$
takes values in the interval $\left[-1,1\right]$, we can identify
$1-2a_{i}$ with $X_{i}^{(r)}$ and obtain a stochastic model of the
flip parameters with 
\[
\lambda_{r}\left(\mathbf{a}\right)=\frac{1}{2^{r-1}}\sum_{i=\frac{r+1}{2}}^{r}\left(\begin{array}{c}
r\\
i
\end{array}\right)\left(2i-r\right)X_{i}^{(r)}.
\]
However, due to the binary nature of the random variables $X_{i}^{(r)}$
we only get the two levels of flip probabilities $X_{i}^{(r)}=1=1-2a_{i}$,
which is equivalent to $a_{i}=0$, and $X_{i}^{(r)}=-1$, which is
equivalent to $a_{i}=1$. This all-or-nothing setup is somewhat extreme.
It only considers the possibilities of either respecting the majority
when $a_{i}=0$ or flipping with probability 1. Hence, we introduce
a scale $s\in\left(0,1\right)$, which is constant, and we identify
$1-2a_{i}$ with $sX_{i}$, thus obtaining two possible levels of
flip probabilities $\frac{1-s}{2}$ and $\frac{1+s}{2}$. So there
is the possibility of a low flip probability and a high flip probability,
both of which lie in $\left(0,1\right)$. The stability parameter
is therefore 
\[
\lambda_{r}\left(\mathbf{a}\right)=\frac{s}{2^{r-1}}\sum_{i=\frac{r+1}{2}}^{r}\left(\begin{array}{c}
r\\
i
\end{array}\right)\left(2i-r\right)X_{i}^{(r)}.
\]
Note that $\IE a_{i}=1/2$ holds for all $i$, and hence $\IE\lambda_{r}\left(\mathbf{a}\right)=0$.

The results for this setup are as follows: 
\begin{enumerate}
\item For $\beta=0$, the model behaves just as outlined in Section \ref{subsec:Independent}.
The stability parameter $\lambda_{r}\left(\mathbf{a}\right)$ normalised
by a term of order $r^{1/4}$ tends to a normal distribution.\\
 For $0<\beta<1$, there is positive correlation between the flip
parameters $a_{i}=\frac{1-sX_{i}^{(r)}}{2}$. However, this correlation
is relatively weak, and the stability parameter behaves as for independent
flip parameters. 
\item If $\beta=1$, then there is stronger correlation between the flip
parameters than for $\beta<1$. However, this increase in correlation
is not reflected in the typical magnitude of $\lambda_{r}\left(\mathbf{a}\right)$.
We once again find a typical magnitude of order $\Theta\left(r^{1/4}\right)$
with equal probabilities of $\lambda_{r}\left(\mathbf{a}\right)$
being positive or negative. It is only the multiplicative constant
in the term $\Theta\left(r^{1/4}\right)$ which differs. See Theorem
\ref{thm:CWM} for more details. 
\item For $\beta>1$, the correlation between the $a_{i}$ is stronger than
for $\beta=1$. Now we see this reflected in the typical magnitude
of $\lambda_{r}\left(\mathbf{a}\right)$ which is of order $\Theta\left(\sqrt{r}\right)$.
Thus, we conclude that the strong correlation between the flip parameters
induces the same order of instability of the fixed point as we observe
in the basic model. However, due to the expectation of each flip parameter
being 1/2, we once again have equal probabilities of observing monotonic
dynamics around 1/2 or alternating dynamics. 
\end{enumerate}
As we see, using the CWM for the flip parameters, we cover the cases
of $\lambda_{r}\left(\mathbf{a}\right)$ being typically of order
$\Theta\left(r^{1/4}\right)$ similarly to the case of independent
flip parameters and $\lambda_{r}\left(\mathbf{a}\right)$ being typically
of order $\Theta\left(\sqrt{r}\right)$ which is the same magnitude
as in the basic model. The stronger the correlation between the flip
parameters, the more unstable the fixed point 1/2 becomes. This means
for strongly correlated flip parameters, there is a pronounced tendency
to a large majority in favour of one of the alternatives, or else
there is oscillating behaviour that alternates between majorities
in favour of each of the alternatives in turn. For weak correlation,
signified by $\beta\leq1$, we obtain an unstable fixed point 1/2,
where the magnitude scales more slowly than for the strong correlation
regime $\beta>1$. Note that similarly to the case of independent
flip parameters with $\IE a_{i}=1/2$, the CWM also yields a random
sign of $\lambda_{r}\left(\mathbf{a}\right)$ with each sign having
probability 1/2.

Formally, we have the following limit theorem: 
\begin{thm}
\label{thm:CWM}Let for each $r$, $\left(X_{i}^{(r)}\right)_{\frac{r+1}{2}\leq i\leq r}$
be $\left\{ -1,1\right\} $-valued random variables with joint distribution
given by (\ref{eq:CWM}), $s\in\left(0,1\right)$, and let $a_{i}:=\frac{1-sX_{i}^{(r)}}{2}$
for each $i\in\left\{ \frac{r+1}{2},\ldots,r\right\} $. We define
the constants 
\[
b_{ri}:=s\left(\begin{array}{c}
r\\
i
\end{array}\right)\frac{2i-r}{2^{r-1}}
\]
for all $r$ and all $i\in\left\{ \frac{r+1}{2},\ldots,r\right\} $
and the sequence $s_{r}:=\sqrt{\sum_{i=\frac{r+1}{2}}^{r}b_{ri}^{2}}$.
Asymptotically, $s_{r}$ is of order $\Theta\left(r^{1/4}\right)$.
We have the following limiting distributions: 
\begin{enumerate}
\item If $\beta<1$, then 
\begin{equation}
\frac{\lambda_{r}\left(\mathbf{a}\right)}{s_{r}}\xrightarrow[r\rightarrow\infty]{\textup{d}}\mathcal{N}\left(0,1\right).\label{eq:CLT_lambda_r}
\end{equation}
\item If $\beta=1$, then $\frac{\lambda_{r}\left(\mathbf{a}\right)}{s_{r}}$
converges in distribution to a symmetric non-normal distribution. 
\item If $\beta>1$, then 
\[
\frac{\lambda_{r}\left(\mathbf{a}\right)}{\sqrt{r}}\xrightarrow[r\rightarrow\infty]{\textup{d}}\frac{1}{2}\left(\delta_{-m}+\delta_{m}\right),
\]
where $m>0$ is a constant independent of $r$ and strictly increasing
in $\beta$. 
\end{enumerate}
\end{thm}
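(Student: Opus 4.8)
The plan is to represent the Curie--Weiss law via the Hubbard--Stratonovich transformation, which decouples the spins conditionally on an auxiliary scalar $T_n$, to run a conditional Berry--Esseen central limit theorem for the resulting sum of independent bounded summands, and to read off the three regimes from the Laplace asymptotics of $T_n$. Write $n$ for the number of spins (so $n=\Theta(r)$) and $\lambda_r(\mathbf{a})=\sum_i b_{ri}X_i^{(r)}$ with $\sum_i b_{ri}^2=s_r^2$. I would first record the coefficient asymptotics, which are essentially those already needed for Theorem~\ref{thm:iid_eq_1/2}: by Stirling and the local CLT, $b_{ri}\approx 2s\sqrt{2/(\pi r)}\,(2i-r)\,e^{-(2i-r)^2/(2r)}$ uniformly for $i$ in the bulk, with negligible tails, whence $s_r^2\approx s^2 r^{1/2}/\sqrt{\pi}$ and $s_r=\Theta(r^{1/4})$. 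The same estimate yields the two facts that organise the argument: the Lyapunov ratio $\sum_i|b_{ri}|^3/s_r^3=\Theta(r^{-1/4})\to 0$, and --- crucially --- $\sum_i b_{ri}=s\sqrt{2r/\pi}\,(1+o(1))=\Theta(\sqrt r)$, which is of strictly larger order than $s_r$. This scale separation is what distinguishes the three regimes: conditionally on $T_n=t$ the spins will have common mean $\tanh t$, so $\lambda_r(\mathbf{a})$ picks up a bias $\tanh(t)\sum_i b_{ri}$ whose contribution to $\lambda_r(\mathbf{a})/s_r$ amplifies $t$ by $\Theta(r^{1/4})$; it therefore survives in the limit precisely when $T_n=\Theta(r^{-1/4})$.

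Applying $\exp(\tfrac{\beta}{2n}(\sum x_i)^2)=\sqrt{\tfrac{n}{2\pi\beta}}\int_{\IR}\exp(-\tfrac{n}{2\beta}t^2+t\sum x_i)\,dt$ to (\ref{eq:CWM}) produces a random variable $T_n$ with density proportional to $\exp(n g_\beta(t))$, $g_\beta(t)=-t^2/(2\beta)+\log(2\cosh t)$, such that conditionally on $T_n=t$ the $X_i^{(r)}$ are independent with mean $\tanh t$ and variance $\operatorname{sech}^2 t$. A Laplace analysis of $g_\beta$ --- using $g_\beta'(0)=0$, $g_\beta''(0)=(\beta-1)/\beta$, $g_1(t)=\log 2-t^4/12+O(t^6)$, the fact that $t=0$ is the unique global maximiser for $\beta\le 1$, and that for $\beta>1$ the global maximisers are $\pm t^\ast$ with $\tanh t^\ast=t^\ast/\beta$ --- gives: $\sqrt{n}\,T_n$ converges in distribution to $\mathcal{N}(0,\beta/(1-\beta))$ for $\beta<1$; $n^{1/4}T_n$ converges in distribution to a variable $U$ with density proportional to $e^{-u^4/12}$ for $\beta=1$; and for $\beta>1$, $T_n$ concentrates on $\{-t^\ast,t^\ast\}$ with mass $\tfrac12$ each and Gaussian fluctuations of order $n^{-1/2}$ about each.

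Now combine. Conditionally on $T_n=t$ in the relevant (shrinking, when $\beta\le 1$) neighbourhood, $\lambda_r(\mathbf{a})$ is a sum of independent bounded variables with mean $\tanh(t)\sum_i b_{ri}$ and variance $\operatorname{sech}^2(t)\,s_r^2$, so a Berry--Esseen bound --- uniform in $t$ there, since $\operatorname{sech}^2 t$ stays bounded away from $0$ and the Lyapunov ratio is $\Theta(r^{-1/4})$ --- shows $(\lambda_r(\mathbf{a})-\tanh(T_n)\sum_i b_{ri})/(s_r\operatorname{sech} T_n)$ is asymptotically $\mathcal{N}(0,1)$ and asymptotically independent of $T_n$. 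For $\beta<1$ the bias satisfies $\tanh(T_n)\sum_i b_{ri}/s_r=O_P(r^{-1/2})\cdot\Theta(r^{1/4})=o_P(1)$ and $\operatorname{sech}^2 T_n\to 1$, giving $\lambda_r(\mathbf{a})/s_r\to\mathcal{N}(0,1)$ in distribution. For $\beta=1$ the bias tends in distribution to $cU$ for an explicit constant $c\ne 0$ (computed from the coefficient asymptotics), so $\lambda_r(\mathbf{a})/s_r\to cU+Z$ in distribution with $Z\sim\mathcal{N}(0,1)$ independent of $U$; this limit is symmetric, and it is non-normal because by Cram\'er's decomposition theorem a normal law cannot be written as the sum of an independent non-normal variable ($cU$) and a normal one. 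For $\beta>1$ one normalises by $\sqrt{r}$ instead: the conditional variance is $\operatorname{sech}^2(t^\ast)s_r^2/r=\Theta(r^{-1/2})\to 0$, and the $n^{-1/2}$ fluctuation of $T_n$ about $\pm t^\ast$ is negligible after multiplication by $\sum_i b_{ri}/\sqrt{r}=\Theta(1)$, so $\lambda_r(\mathbf{a})/\sqrt{r}$ converges in probability to $\pm m$ on the two halves of the probability space, where $m=\tanh(t^\ast)\,s\sqrt{2/\pi}=(t^\ast/\beta)\,s\sqrt{2/\pi}>0$; hence the limit is $\tfrac12(\delta_{-m}+\delta_{m})$, and $m$ is strictly increasing in $\beta$ since $t^\ast(\beta)$ is (implicit function theorem applied to $\tanh t=t/\beta$, using $\operatorname{sech}^2 t^\ast<1/\beta$) and $\tanh$ is increasing.

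I expect the main obstacle to be the critical case $\beta=1$: one has to track the conditional mean $\tanh(T_n)\sum_i b_{ri}$ and the conditional Gaussian fluctuation of $\lambda_r(\mathbf{a})$ simultaneously, show that they decouple in the limit so that the limit law is genuinely the convolution $cU+Z$, pin down the constant $c$ from the coefficient asymptotics, and then argue that this convolution is not normal. By comparison the subcritical and supercritical cases are routine once the scale separation $\sum_i b_{ri}=\Theta(\sqrt{r})\gg s_r$ and the Laplace asymptotics for $T_n$ are in hand; there the only real care needed is the uniformity of the Berry--Esseen estimate across the support of $T_n$.
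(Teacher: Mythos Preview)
Your argument is correct and takes a genuinely different route from the paper. The paper proves Theorem~\ref{thm:CWM} by the method of moments: it expands $\IE\bigl(\sum_i Y_{ri}\bigr)^k$ as a sum over ``profiles'' of index multiplicities (Lemma~\ref{lem:multiplicity}), plugs in known asymptotics for the Curie--Weiss correlations $\IE X_{ri_1}\cdots X_{ri_k}$ in each regime (Proposition~\ref{prop:correlations}), and identifies in each case the single profile (or, at $\beta=1$, the finite family of profiles) that contributes in the limit; non-normality at $\beta=1$ is then checked by verifying $\lim M_4\neq 3(\lim M_2)^2$. Your approach instead decouples the spins via Hubbard--Stratonovich, runs a uniform conditional Berry--Esseen bound, and reads off the regimes from the Laplace asymptotics of the auxiliary field $T_n$. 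What your route buys is a more transparent description of the $\beta=1$ limit as the convolution $cU+Z$ with $U$ having density $\propto e^{-u^4/12}$, and a cleaner non-normality argument via Cram\'er's decomposition theorem; it also makes the mechanism behind the three regimes --- the scale separation $\sum_i b_{ri}=\Theta(\sqrt r)\gg s_r=\Theta(r^{1/4})$ matched against the size of $T_n$ --- completely explicit. The paper's route, on the other hand, is more self-contained and elementary once the correlation asymptotics are in hand, avoids any conditioning or Berry--Esseen machinery, and yields the moments of the limit directly. The one point where you should be a little more careful is the asymptotic independence at $\beta=1$: you want the conditional Berry--Esseen bound to be uniform over $t$ in a shrinking neighbourhood of $0$ so that the joint law of $\bigl(T_n,(\lambda_r-\tanh(T_n)\sum_i b_{ri})/(s_r\operatorname{sech} T_n)\bigr)$ really converges to a product, but this is routine since $\operatorname{sech}^2 t$ stays bounded away from $0$ there and your Lyapunov ratio is $\Theta(r^{-1/4})$ uniformly.
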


Informally, if $\beta\leq1$, the typical stability parameter value
is $\lambda_{r}\left(\mathbf{a}\right)=\pm\Theta\left(r^{1/4}\right)$.
For $\beta>1$, the typical value is $\lambda_{r}\left(\mathbf{a}\right)=\pm m\sqrt{r}\pm\Theta\left(r^{1/4}\right)$.

The limiting distributions of (a suitably normalised) $\lambda_{r}\left(\mathbf{a}\right)$
broadly follow the same pattern as for the magnetisation 
\[
S_{r}:=\sum_{i=\frac{r+1}{2}}^{r}X_{i}^{(r)}.
\]
There are, however, some interesting differences: 
\begin{enumerate}
\item The limiting distribution of the normalised magnetisation $\sqrt{\frac{2}{r}}S_{r}$
when $\beta<1$ is also normal, but the variance depends on the parameter
$\beta$ (see Sections IV.4 and V.9 of \cite{Ell1985} for a detailed
analysis of the CWM): 
\begin{equation}
\sqrt{\frac{2}{r}}S_{r}\xrightarrow[r\rightarrow\infty]{\textup{d}}\mathcal{N}\left(0,\frac{1}{1-\beta}\right).\label{eq:CLT_magnetisation}
\end{equation}
The variance is 1 for $\beta=0$, i.e.\! when the $X_{i}^{(r)}$
are independent, and then it increases as $\beta$ grows, diverging
to infinity as $\beta$ approaches 1 from below. This is not the case
for $\frac{\lambda_{r}\left(\mathbf{a}\right)}{s_{r}}$, which converges
in distribution to a standard normal, independently of the value of
$\beta<1$. Note that $s_{r}$ does not depend on $\beta$. 
\item The normalisation required in each regime of the model in order to
obtain convergence in distribution is different. For the stability
parameter $\lambda_{r}\left(\mathbf{a}\right)$, we normalise by $s_{r}=\Theta\left(r^{1/4}\right)$
in both the regimes $\beta<1$ and $\beta=1$, while $\beta>1$ requires
normalisation by $\sqrt{r}$. The magnetisation $S_{r}$, on the other
hand, requires normalisation by $\sqrt{\frac{r}{2}}$, $\left(\frac{r}{2}\right)^{3/4}$,
and $\frac{r}{2}$, respectively, in each regime. This difference
is crucial since it directly affects the typical magnitude of the
stability parameter which is different than that of the magnetisation. 
\end{enumerate}
For independent flip parameters, we analysed the case that $\IE a_{i}=1/2$,
which makes $\lambda_{r}\left(\mathbf{a}\right)$ a centred random
variable, and the case $\IE a_{i}\neq1/2$. We can do the same for
flip parameters with a joint Curie-Weiss distribution. The first case
was treated above. If we allow an external magnetic field in the model,
the random variables $X_{i}^{(r)}$ no longer have expectation 0,
and thus the flip parameters $a_{i}$ will have expectation different
than 1/2. The joint distribution is given by 
\begin{equation}
\IP\left(X_{1}^{(n)}=x_{1},\ldots,X_{n}^{(n)}=x_{n}\right)=Z^{-1}\exp\left(\frac{\beta}{2n}\left(\sum_{i=1}^{n}x_{i}\right)^{2}+h\sum_{i=1}^{n}x_{i}\right),\label{eq:CWM_ext}
\end{equation}
for each configuration $x\in\left\{ -1,1\right\} ^{n}$. Above, the
parameter $h\in\IR$ gives the strength of the external magnetic field.
We once again re-index the variables such that for each $r$ the index
$i$ of $X_{i}^{(r)}$ runs from $\frac{r+1}{2}$ to $r$. In the
context of the flip parameters $a_{i}=\frac{1-sX_{i}^{(r)}}{2}$,
a positive $h$ implies a bias towards lower flip probabilities, whereas
a negative $h$ means there is a bias towards higher flip probabilities.
Hence, we can regard $h$ as a parameter that regulates how contrarian
the people are.

Under the presence of such a bias, the stability parameter will be
of order $\Theta\left(\sqrt{r}\right)$, regardless what the value
of the parameter $\beta\geq0$ is. The sign of $\lambda_{r}\left(\mathbf{a}\right)$
will be the same as that of $h$. Thus, for $h>0$ which is low contrarianism,
we obtain a model which is qualitatively similar to the basic model,
featuring an unstable fixed point 1/2 with monotonic dynamics. A value
$h<0$ which indicates high contrarianism, on the other hand, yields
an unstable fixed point with alternating dynamics. The magnitude of
the stability parameter will be $\Theta\left(\sqrt{r}\right)$ for
all $h\neq0$ and all $\beta\geq0$.

The formal result is the following theorem: 
\begin{thm}
\label{thm:CWM_ext}Let for each $r$, $\left(X_{i}^{(r)}\right)_{\frac{r+1}{2}\leq i\leq r}$
be $\left\{ -1,1\right\} $-valued random variables with joint distribution
given by (\ref{eq:CWM_ext}), $h\neq0$, $s\in\left(0,1\right)$,
and let $a_{i}:=\frac{1-sX_{i}^{(r)}}{2}$ for each $i\in\left\{ \frac{r+1}{2},\ldots,r\right\} $.
Let $b_{ri}$ and $s_{r}$ be defined as in Theorem \ref{thm:CWM}
for all $r$ and all $i\in\left\{ \frac{r+1}{2},\ldots,r\right\} $.
Then we have for all $\beta\geq0$ 
\begin{equation}
\frac{\lambda_{r}\left(\mathbf{a}\right)-x\left(\beta,h\right)\sum_{i=\frac{r+1}{2}}^{r}b_{ri}}{s_{r}}\xrightarrow[r\rightarrow\infty]{\textup{d}}\mathcal{N}\left(0,1\right).\label{eq:CLT_lambda_r_ext}
\end{equation}
Above, $x\left(\beta,h\right)$ is a constant independent of $r$
that has the same sign as $h$. $\left|x\left(\beta,h\right)\right|$
is increasing in $\beta$. 
\end{thm}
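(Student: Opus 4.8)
The plan is to reduce the statement to a central-limit-type theorem for linear statistics of the Curie--Weiss model with external field, exploiting the well-known fact that in the presence of a field $h\neq0$ the magnetisation concentrates around a nonzero value $x(\beta,h)$ for \emph{every} $\beta\geq0$. First I would recall (e.g.\ from Sections IV.4 and V.9 of \cite{Ell1985}) the Hubbard--Stratonovich representation: writing $a_i=\frac{1-sX_i^{(r)}}{2}$ and using the constants $b_{ri}=s\binom{r}{i}\frac{2i-r}{2^{r-1}}$ from Theorem~\ref{thm:CWM}, we have $\lambda_r(\mathbf a)=\sum_{i}b_{ri}X_i^{(r)}$, so the object of interest is a deterministic-weight linear combination of the spins. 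Conditionally on the auxiliary Gaussian variable (equivalently, conditionally on the empirical mean), the $X_i^{(r)}$ become i.i.d.\ $\pm1$ with a common mean $\tanh(\beta m + h)$ that, as $r\to\infty$, concentrates on the unique positive solution $x(\beta,h)$ of the mean-field equation $x=\tanh(\beta x+h)$; this $x(\beta,h)$ has the sign of $h$ and $|x(\beta,h)|$ is increasing in $\beta$, giving the stated properties of the centring constant.

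Next I would carry out the CLT. Conditionally on the empirical mean taking a value near $x(\beta,h)$, the spins are independent with mean $\approx x(\beta,h)$ and uniformly bounded, so $\lambda_r(\mathbf a)-x(\beta,h)\sum_i b_{ri}=\sum_i b_{ri}(X_i^{(r)}-x(\beta,h))$ is a sum of independent, bounded, mean-(near-)zero terms with weights $b_{ri}$. The Lindeberg condition holds because $s_r^2=\sum_i b_{ri}^2$ and $\max_i |b_{ri}|/s_r\to0$: indeed each $|b_{ri}|\le s\cdot O(r^{-1/2})$ by the local central limit estimate for binomial coefficients (this is exactly the estimate already used to prove $s_r=\Theta(r^{1/4})$ in Theorems~\ref{thm:iid_eq_1/2} and~\ref{thm:CWM}), while $s_r=\Theta(r^{1/4})\to\infty$, so the ratio of the largest weight to $s_r$ vanishes. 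Hence the conditional law of $(\lambda_r(\mathbf a)-x(\beta,h)\sum_i b_{ri})/s_r$ converges to $\mathcal N(0,x(\beta,h)(1-x(\beta,h))\cdot\text{something})$; care is needed here because the conditional variance of each spin is $1-\tanh^2(\beta m+h)$, not $1$. To land on $\mathcal N(0,1)$ exactly as stated, I suspect the intended reading is that $s_r$ in the normalisation should be understood to absorb this variance factor, or more likely the theorem as stated uses the same $s_r$ and the limiting variance is actually $1-x(\beta,h)^2$; in writing the proof I would either track the constant honestly and match it to the claimed $\mathcal N(0,1)$ by the precise definition of $s_r$, or note the harmless discrepancy. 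The remaining step is to integrate out the conditioning: since the empirical mean concentrates exponentially fast on $x(\beta,h)$ (a standard large-deviation bound for the Curie--Weiss model with field, where the rate function has a \emph{unique} global minimum for $h\neq0$), the unconditional law has the same Gaussian limit, and the fluctuations of the centring term coming from the $o(1)$ deviation of the empirical mean are of order $s_r\cdot o(1)$ relative to $\sum_i b_{ri}=\Theta(r^{3/4})$... I would check that $\sum_i b_{ri}/s_r=\Theta(r^{1/2})$ so the deviations of the conditioning variable, which are $O_p(1/\sqrt r)$ around $x(\beta,h)$, contribute $O_p(r^{1/2}\cdot r^{-1/2})=O_p(1)$ to the normalised quantity and hence can be absorbed into the Gaussian limit (or more carefully, the $1/\sqrt r$-scale Gaussian fluctuation of the mean contributes an additional independent Gaussian that must be added into the variance).

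The main obstacle, then, is precisely this last bookkeeping: unlike the $h=0$ case treated in Theorem~\ref{thm:CWM}, the centring term $x(\beta,h)\sum_i b_{ri}$ is of order $r^{3/4}$, which is \emph{larger} than the fluctuation scale $s_r=\Theta(r^{1/4})$, so the $O_p(r^{-1/2})$ fluctuations of the Curie--Weiss empirical mean around $x(\beta,h)$ feed into the normalised statistic at order $O_p(1)$ and genuinely affect the limiting law. The clean way to handle this is the Hubbard--Stratonovich / Laplace-method decomposition: write $X_i^{(r)}$ as conditionally independent given an auxiliary variable $Z_r$ whose law converges (after the appropriate $\sqrt r$ rescaling) to a Gaussian centred at $x(\beta,h)$, expand $\lambda_r(\mathbf a)=x(\beta,h)\sum_i b_{ri}+(Z_r-x(\beta,h))\sum_i b_{ri}+\sum_i b_{ri}(X_i^{(r)}-\mathbb E[X_i^{(r)}\mid Z_r])$, and observe that both correction terms, once divided by $s_r$, converge to \emph{independent} centred Gaussians whose variances sum to the asserted total; identifying this total with the variance encoded in the definition of $s_r$ (so that the final answer is $\mathcal N(0,1)$) is the one computation I would do carefully rather than wave at. Everything else — the Lindeberg check, the large-deviation concentration, the sign and monotonicity of $x(\beta,h)$ — is standard Curie--Weiss theory plus the binomial-coefficient asymptotics already invoked earlier in the paper.
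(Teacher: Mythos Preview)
Your approach via Hubbard--Stratonovich conditioning and a Lindeberg CLT is genuinely different from the paper's. The paper proceeds entirely by the method of moments, exactly parallel to the proof of Theorem~\ref{thm:CWM}: it sets $Y_{ri}=b_{ri}(X_{ri}-x(\beta,h))/s_r$, expands $\IE\bigl(\sum_i Y_{ri}\bigr)^k$ by the binomial theorem and the profile decomposition of Lemma~\ref{lem:multiplicity}, invokes the correlation asymptotics $\IE X_{i_1}\cdots X_{i_k}\to x(\beta,h)^k$ for distinct indices (Proposition~\ref{prop:correlations_ext}), and identifies surviving profiles by power counting in $r$. No conditioning or de~Finetti-type representation appears. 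Your route is more probabilistic and makes the source of Gaussianity transparent; the paper's route simply recycles the combinatorial machinery already built for Theorem~\ref{thm:CWM} and is explicitly left as a sketch.

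There is, however, a concrete arithmetic slip in your outline that changes the picture. You claim $\sum_i b_{ri}=\Theta(r^{3/4})$, but in fact $\sum_i b_{ri}=s\,\lambda_r(\mathbf 0)\approx s\sqrt{2r/\pi}=\Theta(\sqrt r)$ (this is used throughout the paper, e.g.\ in Lemma~\ref{lem:sum_bri^k} and in the informal summary just after Theorem~\ref{thm:CWM_ext}). Similarly, $\max_i|b_{ri}|=\Theta(1)$ by Lemma~\ref{lem:bri_max}, not $O(r^{-1/2})$; the Lindeberg ratio $\max_i|b_{ri}|/s_r=\Theta(r^{-1/4})$ still vanishes, so your Lindeberg check survives, but with the corrected orders one gets $\bigl(\sum_i b_{ri}\bigr)/s_r=\Theta(r^{1/4})$, not $\Theta(r^{1/2})$. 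Hence the $O_p(r^{-1/2})$ fluctuation of the Curie--Weiss empirical mean around $x(\beta,h)$ contributes only $O_p(r^{-1/4})=o_p(1)$ to the normalised statistic. The ``main obstacle'' you identify---a nonnegligible $O_p(1)$ feedback from the conditioning variable requiring an additional independent Gaussian piece---therefore does not arise, and your argument becomes simpler: the conditional Lindeberg CLT alone delivers the limit, and integrating out the conditioning is harmless.

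Your variance concern is well taken. Since $X_i^2=1$, an exact second-moment computation by exchangeability gives $M_2=1-\IE X_1X_2+\bigl(\sum_i b_{ri}\bigr)^2 s_r^{-2}\bigl(\IE X_1X_2-2x\,\IE X_1+x^2\bigr)$; the bracketed correction is $O(1/r)$ and is killed by $\bigl(\sum_i b_{ri}\bigr)^2/s_r^2=\Theta(\sqrt r)$, so $M_2\to 1-x(\beta,h)^2$, consistent with your conditional-variance heuristic. The paper's moment sketch does not carry this out (it explicitly omits the computation), and its profile approximation replaces sums over distinct indices by full product sums, which in the $h=0$ case is innocuous but here drops exactly the $-x(\beta,h)^2$ term. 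You should track this constant honestly rather than try to reconcile it with the stated $\mathcal N(0,1)$.
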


The proofs of this theorem and all previous ones can be found in the
Appendix.

Informally, for $h\neq0$, the typical value of the stability parameter
is $\lambda_{r}\left(\mathbf{a}\right)=x\left(\beta,h\right)\sqrt{\frac{2r}{\pi}}\pm\Theta\left(r^{1/4}\right)$.

Contrary to the absence of an external magnetic field, the CWM behaves
in a qualitatively similar fashion for all values of $\beta\geq0$
when there is an external magnetic field $h\neq0$. This is similar
to the behaviour of the magnetisation $S_{r}$. A central limit theorem
similar to the statement (\ref{eq:CLT_magnetisation}) holds for any
value $\beta\geq0$ provided that $h\neq0$. (\ref{eq:CLT_lambda_r_ext})
is a central limit theorem that holds for all values of $\beta$ and
$h\neq0$. However, just as without the external magnetic field, the
stability parameter and the magnetisation have different typical magnitudes:
the typical magnitude of $\lambda_{r}\left(\mathbf{a}\right)$ is
of order $\Theta\left(\sqrt{r}\right)$, whereas the typical magnitude
of $S_{r}$ is of order $\Theta\left(r\right)$ under the assumption
of an external magnetic field.

Even the strong correlation regime of the CWM does not give a stable
fixed point 1/2 with positive probability. One reason for this is
that there are no realisations of the vector of flip parameters $\mathbf{a}$
where each entry is equal to 1/2. Also, the typical realisations of
$\mathbf{a}$ feature a sizeable majority of entries $a_{i}$ pointing
in the same direction. In conclusion, strong correlation on its own
is not necessarily enough to reproduce the result we saw in Section
\ref{subsec:Flat-Flip-Probabilities} of a stable fixed point under
perfect positive correlation of the flip parameters \emph{and} a positive
probability of all of them being equal to 1/2.

\section{\label{sec:Conclusion}Conclusion}

We investigated the behaviour of the local flip model when the flip
parameters, which are ordinarily fixed constants, are randomly selected.
First we studied the case of small local discussion groups such as
real-life friend or work groups in Section \ref{sec:Small}. Under
independent uniformly distributed flip parameters, we found that 
\begin{itemize}
\item The number of fixed points in the dynamics of the model is generally
small (meaning smaller than the possible maximum of $r$ fixed points).
Thus, the dynamics of the model are fairly simple, with large basins
of attraction to the few attractors that mark the limits public discourse
tends towards as time passes. 
\item Unless we exclude the possibility of flips against unanimous majorities,
the local flip model typically does not present unanimous attractors.
This is a major difference compared to the basic model. As unanimous
majorities are extremely rare in real life, this is a point in favour
of the local flip model. 
\item The universal fixed point 1/2 is mostly stable. There is a tendency
towards ties in most discussions. 
\end{itemize}
Then, in Section \ref{sec:Large}, we studied large discussion groups.
Unanimous majorities are very unlikely to occur for both large and
small discussion groups. On the other hand, we found that the picture
differs considerably between small and large groups as far as the
stability of the fixed point 1/2 is concerned. Whereas the former
feature mostly stable fixed points at 1/2, the latter do not. In fact,
we found that the larger the local discussion groups, generally, the
more unstable the fixed point 1/2 becomes. The many degrees of freedom
inherent in large discussion groups do not lead to the coexistence
of stable and unstable fixed point at 1/2. Contrary to the basic model,
the sign of the stability parameter, which determines whether the
dynamics close to a tie are monotonic or alternating, can be positive
or negative. We showed that for unbiased flip parameters, i.e.\!
flip parameters with expectation 1/2, monotonic and alternating dynamics
around a tie are equally likely. If the flip parameters are biased,
then the sign of the bias determines the sign of the stability parameter
and thus whether we only see monotonic or alternating dynamics but
never both with positive probability. If the flip parameters are biased
towards accepting the majority, then the local flip model behaves
similarly to the basic model with monotonic dynamics. These findings
are stable across a range of degrees of dependency of the flip parameters,
all the way from independence to strong correlation in the low temperature
regime of the CWM. Thus, the picture the local flip model with random
flip parameters paints is that for small discussion groups there is
a tendency towards ties, whereas for large discussion groups there
is a tendency to move away from ties. We either see convergence to
a majority in favour of one of the alternatives, or we observe perpetual
swings from majorities in favour of one alternative to majorities
for the other one.

\section*{Acknowledgements}

This research was supported by a Secihti (formerly Conahcyt) postdoctoral
fellowship and through a SNII fellowship (candidate level) of the
author.

\appendix
%dummy comment inserted by tex2lyx to ensure that this paragraph is not empty

\section*{Appendix}

We prove the results presented in Theorems \ref{thm:iid_neq_1/2},
\ref{thm:iid_eq_1/2}, \ref{thm:CWM}, and \ref{thm:CWM_ext}. We
will use the notation $a\wedge b=\min\{a,b\}$ and $a\vee b=\max\{a,b\}$
for all real numbers $a,b$ throughout the rest of this article.

\section{Proof of Theorem \ref{thm:iid_neq_1/2}}

The first two of these theorems pertain to sums of independent random
variables, and therefore we can apply the central limit theorem by
Feller-L�vy (see e.g.\! Theorem 6.16 in \cite{Kal2021}). The sequence
of random variables for which we want to determine a limiting distribution
is 
\begin{align*}
\frac{\lambda_{r}\left(\mathbf{a}\right)}{\sqrt{r}} & =\frac{1}{\sqrt{r}}\sum_{i=\frac{r+1}{2}}^{r}s\left(\begin{array}{c}
r\\
i
\end{array}\right)\frac{2i-r}{2^{r-1}}\left(1-2a_{i}\right)=\frac{1}{\sqrt{r}}\sum_{i=\frac{r+1}{2}}^{r}b_{ri}\left(1-2a_{i}\right),
\end{align*}
where the coefficients $b_{ri}$ equal $\left(\begin{array}{c}
r\\
i
\end{array}\right)\frac{2i-r}{2^{r-1}}$ as defined in the statement of Theorem \ref{thm:iid_eq_1/2}. We
define for each $r$ and each $i\in\left\{ \frac{r+1}{2},\ldots,r\right\} $
the random variable $Y_{ri}:=\frac{b_{ri}\left(1-2a_{i}\right)}{\sqrt{r}}$.
In order to show Theorem \ref{thm:iid_neq_1/2} for this triangular
array of random variables $\left(Y_{ri}\right)_{r\in\IN,i\in\left\{ \frac{r+1}{2},\ldots,r\right\} }$,
we need to verify four statements:

\begin{cond}\label{cond:Feller-L=00003D0000E9vy} \begin{enumerate} 

\item $\left(Y_{ri}\right)_{r\in\IN,i\in\left\{ \frac{r+1}{2},\ldots,r\right\} }$
is a null array, i.e.\! for each $r$ the random variables $Y_{ri}$
are independent and $\lim_{r\rightarrow\infty}\sup_{i}\IE\left(\left|Y_{ri}\right|\wedge1\right)=0$. 

\item $\sum_{i=\frac{r+1}{2}}^{r}\IP\left\{ \left|Y_{ri}\right|>\varepsilon\right\} \xrightarrow[r\rightarrow\infty]{}0,\varepsilon>0$. 

\item $\sum_{i=\frac{r+1}{2}}^{r}\IE\left(Y_{ri}\II_{\left\{ \left|Y_{ri}\right|\leq1\right\} }\right)\xrightarrow[r\rightarrow\infty]{}b\in\IR$,
where for any measurable set $A$ $\II_{A}$ refers to the indicator
function of $A$. 

\item $\sum_{i=\frac{r+1}{2}}^{r}\IV\left(Y_{ri}\II_{\left\{ \left|Y_{ri}\right|\leq1\right\} }\right)\xrightarrow[r\rightarrow\infty]{}c\geq0$.
\end{enumerate} \end{cond}

Once the four conditions are verified, we have obtained convergence
in distribution of $\sum_{i=\frac{r+1}{2}}^{r}Y_{ri}$ to $\mathcal{N}\left(b,c\right)$.
This includes the degenerate normal distribution $\mathcal{N}\left(b,0\right)=\delta_{b}$
as a special case.

In order to show the first statement, we need to determine the asymptotic
order of each $Y_{ri}$. That means we need to evaluate each $b_{ri}$
asymptotically. We will next present and prove several statements
regarding these coefficients before proceeding to the proof of the
four statements in Condition \ref{cond:Feller-L=00003D0000E9vy}.
Since we will later also need powers $b_{ri}^{k}$ for all $k\in\IN$,
we will take care of this in the following lemma: 
\begin{lem}
\label{lem:bri_asymp}Let $r,k\in\IN$ and $i=\frac{r+1}{2}+\eta_{r}\in\left\{ \frac{r+1}{2},\ldots,r\right\} $.
We distinguish four classes of $\eta_{r}$: 
\begin{enumerate}
\item If $\eta_{r}=o\left(\sqrt{r}\right)$, then $b_{ri}^{k}=\Theta\left(\frac{\eta_{r}^{k}}{r^{k/2}}\right)$. 
\item If $\lim_{r\rightarrow\infty}\frac{\eta_{r}}{\sqrt{r}}=h>0$, then
$b_{ri}^{k}\xrightarrow[r\rightarrow\infty]{}\left(\frac{32}{\pi}\right)^{k/2}\exp\left(-2kh^{2}\right)h^{k}>0$. 
\item If $\sqrt{r}=o\left(\eta_{r}\right)$ and $\eta_{r}=o(r)$, then $b_{ri}^{k}=\Theta\left(\exp\left(-2k\frac{\eta_{r}^{2}}{r}\right)\frac{\eta_{r}^{k}}{r^{k/2}}\right)$. 
\item If $\lim_{r\rightarrow\infty}\frac{\eta_{r}}{r}=h\in\left(1/2,1\right]$,
then $b_{ri}^{k}=\Theta\left(\exp\left(-2khr\right)r^{k/2}\right)$. 
\end{enumerate}
\end{lem}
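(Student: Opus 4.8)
The plan is to treat $b_{ri}$ as $\binom{r}{i}(2i-r)/2^{r-1}$ and apply a local central limit estimate to the binomial coefficient. Writing $i = \frac{r+1}{2} + \eta_r$, we have $2i-r = 2\eta_r + 1$, so the factor $(2i-r)^k = (2\eta_r+1)^k$ is asymptotically $\Theta(\eta_r^k)$ whenever $\eta_r \to \infty$, and is bounded when $\eta_r$ stays bounded — this contributes the $\eta_r^k$ part of each claimed order (with the convention that in case (1) with $\eta_r$ bounded, $\eta_r^k/r^{k/2}$ should be read as the genuine order, which is consistent). The real content is in the binomial coefficient. I would use the standard estimate $\binom{r}{i}/2^r \asymp \frac{1}{\sqrt{r}}\exp\!\left(-\frac{2(i-r/2)^2}{r}\right)$, valid uniformly over $i$ with correction factors that are $\Theta(1)$ in the central and moderate-deviation range and can be replaced by sharper Stirling bounds in the large-deviation range $i = hr$. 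Since $i - r/2 = \eta_r + \tfrac12$, the exponent is $-\frac{2\eta_r^2}{r}(1+o(1))$ when $\eta_r = o(r)$, which after raising to the $k$-th power gives the $\exp(-2k\eta_r^2/r)$ factors in cases (1)–(3); in case (4), $i/r \to h$ is bounded away from $1/2$, so one instead uses $\binom{r}{hr}/2^r \asymp \frac{1}{\sqrt{r}}e^{-r I(h)}$ with the binary relative-entropy rate $I(h) = \log 2 - H(h)$, and one must check that the linear-in-$r$ part of $-\log b_{ri}$ matches $2hr$ up to the stated multiplicative constant; writing $\binom{r}{i} = \frac{2^{r-1}}{\text{(something)}}$ and tracking the $2^{r-1}$ against $2^r$ absorbs one factor of $2$, and the remaining bookkeeping is routine.

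The cleanest way to organise the four cases uniformly is to prove once and for all the two-sided bound
\[
b_{ri} = \Theta\!\left( \frac{2\eta_r+1}{\sqrt{r}} \exp\!\left(-\frac{2\eta_r^2}{r} + O\!\left(\frac{\eta_r}{r} + \frac{\eta_r^3}{r^2}\right)\right)\right)
\]
valid as long as $\eta_r = o(r)$, by taking logs in Stirling's formula $\log\binom{r}{i} = r\log r - i\log i - (r-i)\log(r-i) + O(\log r)$ and Taylor-expanding around $i = r/2$. The cubic error term $\eta_r^3/r^2$ is $o(1)$ precisely when $\eta_r = o(r^{2/3})$, which is more than enough for cases (1) and (2) ($\eta_r = O(\sqrt r)$) and still fine for case (3) as long as one is careful, since one only needs the error to be absorbed into the $\Theta$; for larger $\eta_r$ up to $o(r)$ one should instead expand $\log\binom{r}{i}/2^r$ as $-r\,\phi(\eta_r/r)$ for a smooth convex $\phi$ with $\phi(0)=\phi'(0)=0$, $\phi''(0) = 2$, so that $-\log b_{ri}^k/k$ is $2\eta_r^2/r$ to leading order across the whole range $\eta_r = o(r)$ and becomes $r\phi(h) = 2hr(1+o(1))$... — wait, $\phi(h)$ need not equal $2h$; rather in case (4) the claim is $b_{ri}^k = \Theta(\exp(-2khr) r^{k/2})$, so what must be verified is that $r\phi(h) - (k/2)\log r$ differs from $2khr$ only by a $\Theta$-absorbable amount, which forces us to check the identity $\phi(h) = 2h$ is being used only loosely or that the intended reading is "$\Theta(\exp(-\Theta(r)) \cdot \text{poly})$" with the constant in the exponent being exactly $2kh$; I would simply verify $I(h) = \log 2 - H(h) = 2h$ fails in general and so reinterpret case (4)'s exponent as the correct entropy rate, flagging this, OR — more likely — observe the paper's $b_{ri}$ in case (4) is genuinely of order $\exp(-r\,I(h))$ times $r^{k/2-\text{something}}$ and the stated $2khr$ is the author's shorthand for "exponentially small at a linear rate", in which case the honest statement to prove is $b_{ri}^k = \exp(-\Theta(r))\cdot\Theta(r^{k/2})$ and I would prove exactly that.

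Concretely, the steps in order: (i) state the Stirling-based uniform estimate for $\binom{r}{i}/2^r$ with explicit error control, distinguishing $\eta_r = o(r)$ (quadratic regime) from $\eta_r = \Theta(r)$ (entropy regime); (ii) multiply by $(2\eta_r+1)^k$ and raise to the $k$-th power, reading off cases (1)–(3) directly from the quadratic expansion, noting that the subleading corrections $O(\eta_r/r + \eta_r^3/r^2)$ are $o(1)$ in each of these regimes — for (1) since $\eta_r = o(\sqrt r)$, for (2) since $\eta_r = \Theta(\sqrt r)$, and for (3) since one needs $\eta_r = o(r)$ and the $\eta_r^3/r^2$ term requires the finer $\phi$-expansion which still keeps the leading exponent at $2k\eta_r^2/r$; (iii) handle case (4) via the entropy rate, extracting the $r^{k/2}$ polynomial factor from the $k$ copies of the $1/\sqrt r$ prefactor times $\eta_r^k = \Theta(r^k)$, i.e. $(r^k/r^{k/2}) = r^{k/2}$, times the exponential; (iv) for case (2) compute the explicit limiting constant: plugging $\eta_r \approx h\sqrt r$ gives $b_{ri} \to \sqrt{\tfrac{8}{\pi}}\,e^{-2h^2}\cdot h\cdot(\text{factor from }2^{r-1}\text{ vs }2^r) $, and one checks $\sqrt{8/\pi}\cdot$(that factor)$ = \sqrt{32/\pi}$, i.e. the missing factor is $2$, consistent with $\binom{r}{i}/2^{r-1} = 2\binom{r}{i}/2^r$. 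The main obstacle is purely bookkeeping: getting the multiplicative constant $(32/\pi)^{k/2}$ in case (2) exactly right (tracking the $2^{r-1}$, the $\sqrt{2/(\pi r)}$ from the de Moivre–Laplace prefactor, and the $(2\eta_r)^k \approx (2h\sqrt r)^k$ all at once), and, in case (4), resolving whether "$2khr$" is meant literally or as shorthand for the entropy rate — I would prove the literal-constant version only if $I(h)=2h$ holds on the relevant range (it does not for all $h\in(1/2,1]$, so I expect the intended meaning is the entropy rate and would state and prove it that way, remarking on the discrepancy).
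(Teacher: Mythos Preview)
Your approach is essentially the paper's: both rest on the Gaussian local approximation to the binomial coefficient. The paper invokes the de Moivre--Laplace local limit theorem to write $b_{ri} \approx 4\phi\bigl(\tfrac{2i-r}{\sqrt r}\bigr)\tfrac{2i-r}{\sqrt r}$ with $\phi$ the standard normal density, substitutes $i = \tfrac{r+1}{2} + \eta_r$ to obtain $b_{ri} \approx 2\sqrt{2/\pi}\,\exp\!\bigl(-\tfrac{(1+2\eta_r)^2}{2r}\bigr)\,\tfrac{1+2\eta_r}{\sqrt r}$, and then reads off all four cases by inspection; general $k$ follows since the approximating values lie in a bounded set on which $x\mapsto x^k$ is uniformly continuous. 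Your Stirling expansion with explicit cubic remainders is just a more hands-on version of the same computation, and it is perfectly adequate for cases (1)--(3); the constant $(32/\pi)^{k/2}$ in case (2) drops out exactly as you indicate once the $2^{r-1}$ (versus $2^r$) and the $2\eta_r\sim 2h\sqrt r$ are tracked.

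Your anxiety over case (4) is mathematically well-founded but misplaced here: observe that $\eta_r = i - \tfrac{r+1}{2} \le \tfrac{r-1}{2}$, so $\eta_r/r < \tfrac12$ always, and the hypothesis $\lim_{r\to\infty} \eta_r/r = h \in (\tfrac12,1]$ is vacuous --- there is nothing to prove. (The paper's own treatment of case (4) simply plugs $\eta_r \sim hr$ into the same Gaussian formula, arriving at the stated exponent by that substitution; it does not carry out an entropy-rate calculation either.) You can therefore drop the entire discussion of $I(h)$ versus $2h$ and the large-deviation regime. The proof to write is the paper's: one uniform local-limit estimate, substitute, specialise. Your extra care with Stirling error terms is not wrong, merely more than the paper actually supplies.
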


\begin{proof}
The proof of this lemma uses the local limit theorem for the binomial
distribution by de Moivre-Laplace: 
\begin{thm}
\label{thm:Moivre-Laplace}Let $P_{n}$ be the binomial distribution
with $n\in\IN$ $\left\{ 0,1\right\} $-valued variables and parameter
$1/2$ and let $\phi$ be the Lebesgue density function of the standard
normal distribution. Then we have 
\[
\sup_{i\in\left\{ 0,\ldots,n\right\} }\left|\frac{\sqrt{n}}{2}P_{n}\left\{ i\right\} -\phi\left(\frac{i-n/2}{\sqrt{n}/2}\right)\right|\xrightarrow[r\rightarrow\infty]{}0.
\]
\end{thm}

The theorem states that the scaled point probabilities $\frac{\sqrt{n}}{2}P_{n}\left\{ i\right\} $
of the binomial distribution are well approximated for large values
of $n$ by the corresponding values of the density function of the
standard normal distribution. Note that the convergence is uniform
over all values of $i\in\left\{ 0,\ldots,n\right\} $.

An elementary calculation shows that for all $r$ and all $i$ 
\[
b_{ri}=4P_{r}\left\{ i\right\} \left(i-r/2\right).
\]
Translating Theorem \ref{thm:Moivre-Laplace} to our setting, we have
\begin{equation}
\sup_{i\in\left\{ \frac{r+1}{2},\ldots,r\right\} }\left|b_{ri}-4\phi\left(\frac{i-r/2}{\sqrt{r}/2}\right)\frac{i-r/2}{\sqrt{r}/2}\right|\xrightarrow[r\rightarrow\infty]{}0.\label{eq:bri_conv}
\end{equation}
Due to the exponential nature of $\phi$, the set 
\begin{equation}
A:=\left\{ \left.4\phi\left(\frac{i-r/2}{\sqrt{r}/2}\right)\frac{i-r/2}{\sqrt{r}/2}\,\right|\,r\in\IN,i\in\left\{ \frac{r+1}{2},\ldots,r\right\} \right\} \label{eq:values_normal_density}
\end{equation}
is bounded. Hence, the uniform convergence stated in (\ref{eq:bri_conv})
follows from Theorem \ref{thm:Moivre-Laplace}.

Now we evaluate the expression $4\phi\left(\frac{i-n/2}{\sqrt{n}/2}\right)\frac{i-n/2}{\sqrt{n}/2}$
to calculate the asymptotic expressions in the lemma. Let $r\in\IN$
and $i=\frac{r+1}{2}+\eta_{r}\in\left\{ \frac{r+1}{2},\ldots,r\right\} $.
We calculate 
\begin{align*}
b_{ri} & \approx4\phi\left(\frac{i-r/2}{\sqrt{r}/2}\right)\frac{i-r/2}{\sqrt{r}/2}=2\sqrt{\frac{2}{\pi}}\exp\left(-\left(\frac{\frac{r+1}{2}+\eta_{r}-\frac{r}{2}}{\sqrt{r}/2}\right)^{2}\right)\frac{\frac{r+1}{2}+\eta_{r}-\frac{r}{2}}{\sqrt{r}/2}\\
 & =2\sqrt{\frac{2}{\pi}}\exp\left(-\frac{1+4\eta_{r}+4\eta_{r}^{2}}{2r}\right)\frac{1+2\eta_{r}}{\sqrt{r}}.
\end{align*}
If $\eta_{r}=o\left(\sqrt{r}\right)$, then 
\[
b_{ri}\approx2\sqrt{\frac{2}{\pi}}\frac{2\eta_{r}}{\sqrt{r}}=\Theta\left(\frac{\eta_{r}}{\sqrt{r}}\right).
\]
If $\lim_{r\rightarrow\infty}\frac{\eta_{r}}{\sqrt{r}}=h>0$, then
\[
b_{ri}\approx2\sqrt{\frac{2}{\pi}}\exp\left(-\frac{4\eta_{r}^{2}}{2r}\right)\frac{2\eta_{r}}{\sqrt{r}}\xrightarrow[r\rightarrow\infty]{}\left(\frac{32}{\pi}\right)^{1/2}\exp\left(-2h^{2}\right)h,
\]
which is clearly positive. If $\sqrt{r}=o\left(\eta_{r}\right)$ and
$\eta_{r}=o(r)$, then 
\[
b_{ri}\approx2\sqrt{\frac{2}{\pi}}\exp\left(-\frac{4\eta_{r}^{2}}{2r}\right)\frac{2\eta_{r}}{\sqrt{r}}=\Theta\left(\exp\left(-2\frac{\eta_{r}^{2}}{r}\right)\frac{\eta_{r}}{\sqrt{r}}\right).
\]
If $\lim_{r\rightarrow\infty}\frac{\eta_{r}}{r}=h\in\left(1/2,1\right]$,
then 
\[
b_{ri}\approx2\sqrt{\frac{2}{\pi}}\exp\left(-2h-2hr\right)\cdot2h\sqrt{r}=\Theta\left(\exp\left(-2hr\right)\sqrt{r}\right).
\]
Thus, the statements of Lemma \ref{lem:bri_asymp} hold for $k=1$.
The statements for general $k\in\IN$ follow from the observation
that the set $A$ in (\ref{eq:values_normal_density}) is bounded,
and the function $x\mapsto x^{k}$ restricted to any bounded subset
of the reals is uniformly continuous. 
\end{proof}
Lemma \ref{lem:bri_asymp} states that the only coefficients $b_{ri}$
which do not decay as $r$ goes to infinity are those for which $i-\frac{r+1}{2}$
is of order $\sqrt{r}$, in which case $b_{ri}$ converges to some
positive constant. Therefore, if we are looking for the maximum of
$b_{ri}$ for a fixed $r$ over all $i$, we only need to look at
those $i$ which are of order $\sqrt{r}$, provided $r$ is large
enough. 
\begin{lem}
\label{lem:bri_max}Let $k\in\IN$. For large enough $r$, the maximum
of $\left\{ b_{ri}^{k}\,|\,i\in\left\{ \frac{r+1}{2},\ldots,r\right\} \right\} $
is located at $i\approx\frac{r+1}{2}+\frac{\sqrt{r}}{2}$ and the
value of the corresponding $b_{ri}^{k}$ is asymptotically given by
\[
0<\left(\frac{2\sqrt{2}}{\sqrt{e\pi}}\right)^{k}<1.
\]
Also, for all $r$ large enough, the set of coefficients $b_{ri}^{k}$
for all $i\in\left\{ \frac{r+1}{2},\ldots,r\right\} $ is bounded
above by some constant strictly smaller than $1$. 
\end{lem}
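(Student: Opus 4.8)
The plan is to reduce the statement to the case $k=1$ and then exploit the uniform approximation (\ref{eq:bri_conv}) established in the proof of Lemma \ref{lem:bri_asymp}. Since $b_{ri}\ge 0$ for every $i\in\{\frac{r+1}{2},\ldots,r\}$ and $x\mapsto x^{k}$ is strictly increasing on $[0,\infty)$, the index maximising $b_{ri}^{k}$ coincides with the one maximising $b_{ri}$, and $\max_{i}b_{ri}^{k}=(\max_{i}b_{ri})^{k}$. Hence it suffices to locate the maximiser of $b_{ri}$ over $i$ and to evaluate $\max_{i}b_{ri}$ asymptotically, then raise the result to the $k$th power.

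For a fixed $r$, write $x_{i}:=\frac{i-r/2}{\sqrt{r}/2}$ for $i\in\{\frac{r+1}{2},\ldots,r\}$ and set $\psi(x):=4\phi(x)x=2\sqrt{2/\pi}\,x\,e^{-x^{2}/2}$ for $x\ge 0$. By (\ref{eq:bri_conv}) we have $\sup_{i}\bigl|b_{ri}-\psi(x_{i})\bigr|\to 0$, so the asymptotics of $\max_{i}b_{ri}$ agree with those of $\max_{i}\psi(x_{i})$. A short calculus computation gives $\psi'(x)=2\sqrt{2/\pi}\,e^{-x^{2}/2}(1-x^{2})$, so $\psi$ is continuous on $[0,\infty)$, vanishes at $0$ and at $+\infty$, and has a single critical point at $x=1$, which is its global maximum, with value $\psi(1)=2\sqrt{2/\pi}\,e^{-1/2}=\frac{2\sqrt{2}}{\sqrt{e\pi}}$.

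Next I would transfer this to the discrete maxima. The grid $\{x_{i}\}$ ranges over $[\tfrac{1}{\sqrt{r}},\sqrt{r}]$ with consecutive spacing $\tfrac{2}{\sqrt{r}}$, so for all large $r$ it contains a point within distance $\tfrac{1}{\sqrt{r}}$ of $x=1$; by continuity $\psi$ at that point differs from $\psi(1)$ by $o(1)$, while no $\psi(x_{i})$ can exceed $\psi(1)$. Therefore $\max_{i}\psi(x_{i})\to\psi(1)$, and combining with (\ref{eq:bri_conv}) gives $\max_{i}b_{ri}\to\frac{2\sqrt{2}}{\sqrt{e\pi}}$. Since $x=1$ is the \emph{unique} maximiser of $\psi$ and $\psi(x)\to 0$ as $x\to\infty$, any near-maximiser $\hat x$ must satisfy $\hat x\to 1$; unravelling $x_{i}=\frac{1+2\eta_{r}}{\sqrt{r}}$ with $i=\frac{r+1}{2}+\eta_{r}$ then forces $\eta_{r}\approx\frac{\sqrt{r}}{2}$, i.e.\ the maximising index satisfies $i\approx\frac{r+1}{2}+\frac{\sqrt{r}}{2}$, as claimed.

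Finally, for the uniform bound note that $\frac{2\sqrt{2}}{\sqrt{e\pi}}=\sqrt{8/(e\pi)}<1$, a numerical check since $e\pi>8$. Fix $\varepsilon>0$ with $\sqrt{8/(e\pi)}+\varepsilon<1$; from $\max_{i}b_{ri}\to\sqrt{8/(e\pi)}$ we get, for all large $r$, $b_{ri}\le\max_{j}b_{rj}<\sqrt{8/(e\pi)}+\varepsilon$ for every $i$, whence $b_{ri}^{k}<\bigl(\sqrt{8/(e\pi)}+\varepsilon\bigr)^{k}=:c<1$, a constant independent of $r$ and $i$. The only delicate point is the discretisation step: one must check that the maximum over the finite index set genuinely converges to the maximum of $\psi$ and is attained near $x=1$ rather than in the decaying tail of $\psi$, but this is handled by the vanishing mesh of $\{x_{i}\}$ together with the uniqueness of the maximiser of $\psi$ and its decay at infinity, both already implicit in Lemma \ref{lem:bri_asymp}.
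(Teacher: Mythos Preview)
Your proof is correct and follows essentially the same approach as the paper's: both locate the maximum of the limiting function coming from the uniform de Moivre--Laplace approximation (\ref{eq:bri_conv}) and then transfer it back to the discrete coefficients. The only difference is cosmetic---you reduce to $k=1$ at the outset via the monotonicity of $x\mapsto x^{k}$ and optimise $\psi(x)=4x\phi(x)$ in the variable $x=\tfrac{i-r/2}{\sqrt{r}/2}$, whereas the paper keeps $k$ general throughout and optimises $f(h)=(32/\pi)^{k/2}e^{-2kh^{2}}h^{k}$ in the variable $h=\eta_{r}/\sqrt{r}\approx x/2$; your reduction is a mild simplification but the underlying argument is identical.
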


\begin{proof}
Let $k\in\IN$. Let $h>0$ and $i=\frac{r+1}{2}+\left\lfloor h\sqrt{r}\right\rfloor $.
The brackets $\left\lfloor \cdot\right\rfloor $ denote the floor
function which rounds down the expression on the inside to the nearest
integer. Then by Lemma \ref{lem:bri_asymp}, 
\[
b_{ri}^{k}\xrightarrow[r\rightarrow\infty]{}\left(\frac{32}{\pi}\right)^{k/2}\exp\left(-2kh^{2}\right)h^{k}.
\]
We identify the value of the constant $h$ which maximises the limit
on the right hand side above. This expression can be understood as
a function $f:\left(0,\infty\right)\rightarrow\IR$ of $h$. Calculating
the first derivative, we obtain 
\[
f'(h)=\exp\left(-2kh^{2}\right)k\left(-4h^{k+1}+h^{k-1}\right).
\]
Equating the first derivative to 0, we identify the critical point
$h_{0}=1/2$. We calculate the second derivative 
\[
f''(h)=\exp\left(-2kh^{2}\right)k\left(-4kh^{k}\left(-4h^{2}+1\right)+h^{k-2}\left(-4\left(k+1\right)h^{2}+k-1\right)\right).
\]
The value of the second derivative at the critical point $h_{0}$
is negative, hence $h_{0}$ is indeed the unique global maximum of
$f$ on $\left(0,\infty\right)$. By substituting $h_{0}$ into $f$,
we obtain the asymptotic maximum of $b_{ri}^{k}$. It is easily calculated
that 
\[
f\left(h_{0}\right)=\left(\frac{2\sqrt{2}}{\sqrt{e\pi}}\right)^{k},
\]
and this values lies in $\left(0,1\right)$. As for any $h>0$ and
$i=\frac{r+1}{2}+\left\lfloor h\sqrt{r}\right\rfloor $ the coefficients
converge uniformly over all values of $h$ to $f(h)$, we can choose
$r_{0}\in\IN$ large enough that for all $r\geq r_{0}$ and all $i\in\left\{ \frac{r+1}{2},\ldots,r\right\} $
we have 
\[
0<b_{ri}^{k}<1\quad\textup{and}\quad\sup_{r\geq r_{0}}\sup_{i\in\left\{ \frac{r+1}{2},\ldots,r\right\} }b_{ri}^{k}<1.
\]
So starting at $r_{0}$, all coefficients $b_{ri}^{k}$ are uniformly
bounded away from 1. 
\end{proof}
We next determine the asymptotic order of the sums $\sum_{i=\frac{r+1}{2}}^{r}b_{ri}^{k}$. 
\begin{lem}
\label{lem:sum_bri^k}Let $k\in\IN$. Then we have 
\[
\sum_{i=\frac{r+1}{2}}^{r}b_{ri}^{k}=\Theta\left(\sqrt{r}\right).
\]
\end{lem}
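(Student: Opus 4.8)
The plan is to sandwich $\sum_{i=\frac{r+1}{2}}^{r}b_{ri}^{k}$ between two constant multiples of $\sqrt{r}$; since every summand is nonnegative it suffices to prove an upper bound $O(\sqrt{r})$ and a matching lower bound $\Omega(\sqrt{r})$, and neither needs a sharp constant. For the upper bound I would exploit the fact (Lemma \ref{lem:bri_max}) that the $b_{ri}$ are eventually uniformly $<1$, which collapses the problem to $k=1$; for the lower bound I would isolate a window of $\Theta(\sqrt{r})$ indices on which $b_{ri}$ stays bounded away from $0$, using the uniform de Moivre--Laplace estimate already established in the proof of Lemma \ref{lem:bri_asymp}.

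\textbf{Upper bound.} By Lemma \ref{lem:bri_max} there is an $r_{0}$ such that $0\le b_{ri}<1$ for all $r\ge r_{0}$ and all admissible $i$, hence $b_{ri}^{k}\le b_{ri}$ for those $r$ and it is enough to estimate $\sum_{i}b_{ri}$. Here I would use the identity $b_{ri}=4P_{r}\{i\}\left(i-r/2\right)$ recorded in the proof of Lemma \ref{lem:bri_asymp} together with the symmetry $P_{r}\{i\}=P_{r}\{r-i\}$, which gives
\[
\sum_{i=\frac{r+1}{2}}^{r}b_{ri}=4\sum_{i>r/2}\left(i-\tfrac{r}{2}\right)P_{r}\{i\}=2\sum_{i=0}^{r}\left|i-\tfrac{r}{2}\right|P_{r}\{i\}=:2m_{r}.
\]
By the Cauchy--Schwarz (equivalently Jensen) inequality $m_{r}^{2}\le\sum_{i=0}^{r}\left(i-r/2\right)^{2}P_{r}\{i\}$, which is just the variance $r/4$ of the $\mathrm{Bin}(r,1/2)$ distribution; thus $m_{r}\le\sqrt{r}/2$ and $\sum_{i}b_{ri}\le\sqrt{r}$. (The sharper asymptotics $\sum_{i}b_{ri}\approx\sqrt{2r/\pi}$ mentioned after Figure \ref{fig:Update-Function} would serve just as well, but are not needed.) For the finitely many $r<r_{0}$ the sum is a finite positive number, so $\sum_{i}b_{ri}^{k}=O(\sqrt{r})$.

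\textbf{Lower bound and the main obstacle.} The hard part is the lower bound: a single term is only $\Theta(1)$ by Lemma \ref{lem:bri_max}, and Hölder's inequality against the crude bound on the number of summands degrades for $k\ge2$, so I really need a macroscopic block of indices on which $b_{ri}$ is bounded below. I would take the indices $i$ with $\tfrac14\sqrt{r}\le i-\tfrac{r+1}{2}\le\tfrac34\sqrt{r}$; for these, $x_{i}:=\frac{i-r/2}{\sqrt{r}/2}$ lies, once $r$ is large, in a fixed compact interval $[a,b]\subset(0,\infty)$ on which the continuous positive function $x\mapsto4\phi(x)x$ has a positive minimum $2\delta>0$. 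Combining this with the uniform bound $\sup_{i}\bigl|b_{ri}-4\phi(x_{i})x_{i}\bigr|\to0$ from (\ref{eq:bri_conv}) yields $b_{ri}\ge\delta$ for all such $i$ and all large $r$; since that range contains at least $\tfrac12\sqrt{r}-1$ integers, $\sum_{i}b_{ri}^{k}\ge\delta^{k}\bigl(\tfrac12\sqrt{r}-1\bigr)=\Omega(\sqrt{r})$. The two bounds together give $\sum_{i=\frac{r+1}{2}}^{r}b_{ri}^{k}=\Theta(\sqrt{r})$. The only delicate point is the passage from the pointwise asymptotics of Lemma \ref{lem:bri_asymp} to a uniform lower bound over a $\Theta(\sqrt{r})$-long window, and this is immediate from the uniform convergence (\ref{eq:bri_conv}), so no new estimate beyond Theorem \ref{thm:Moivre-Laplace} is required.
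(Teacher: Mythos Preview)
Your proof is correct and follows essentially the same two-step strategy as the paper: reduce the upper bound to $k=1$ via Lemma \ref{lem:bri_max} and then bound $\sum_i b_{ri}$, and obtain the lower bound from a window of $\Theta(\sqrt{r})$ indices on which $b_{ri}$ is uniformly bounded away from $0$. The only minor difference is cosmetic: for the $k=1$ upper bound the paper simply cites the asymptotic $\sum_i b_{ri}\approx\sqrt{2r/\pi}$ from \cite{Gal2017}, whereas you give a self-contained estimate via Cauchy--Schwarz on the mean absolute deviation of $\mathrm{Bin}(r,1/2)$, which is a nice touch.
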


The lemma states that for all $k\in\IN$ the asymptotic order of the
sum $\sum_{i=\frac{r+1}{2}}^{r}b_{ri}^{k}$ is the same, namely $\Theta\left(\sqrt{r}\right)$.
This has important implications for the remaining proofs. 
\begin{proof}
We first show an asymptotic upper bound for $\sum_{i=\frac{r+1}{2}}^{r}b_{ri}^{k}$.
Lemma \ref{lem:bri_max} says that, for all $r$ large enough, all
summands $b_{ri}$ are smaller than 1. Hence, $b_{ri}^{k}\leq b_{ri}$
for all $i$. Therefore, we have the upper bound 
\[
\sum_{i=\frac{r+1}{2}}^{r}b_{ri}^{k}\leq\sum_{i=\frac{r+1}{2}}^{r}b_{ri}\approx\sqrt{\frac{2r}{\pi}}=\Theta\left(\sqrt{r}\right),
\]
using the asymptotic expression for $\sum_{i=\frac{r+1}{2}}^{r}b_{ri}$
from p.\! 27 of \cite{Gal2017}.

Next, we show a lower bound for $\sum_{i=\frac{r+1}{2}}^{r}b_{ri}^{k}$.
By Statement 2 of Lemma \ref{lem:bri_asymp}, we have for any constants
$0<c<C<\infty$, $r$ large enough, and $i\in B:=\left\{ i\in\IN\,|\,\frac{r+1}{2}+c\sqrt{r}<i<\frac{r+1}{2}+C\sqrt{r}\right\} $
\[
b_{ri}^{k}\geq\frac{1}{2}\min\left\{ \left.\left(\frac{32}{\pi}\right)^{k/2}\exp\left(-2kh^{2}\right)h^{k}\,\right|\,h\in\left[c,C\right]\right\} =:\tau.
\]
Then $\tau>0$ holds. We note that the cardinality of index set $B$
is at least $\left\lfloor \left(C-c\right)\sqrt{r}\right\rfloor $,
so a lower bound for the sum $\sum_{i=\frac{r+1}{2}}^{r}b_{ri}^{k}$
is given by 
\[
\tau\left\lfloor \left(C-c\right)\sqrt{r}\right\rfloor =\Theta\left(\sqrt{r}\right).
\]
\end{proof}

\subsubsection*{Statement 1}

After these preparatory lemmas which we will also use for later proofs,
we are ready to show Statement 1 concerning the $\left(Y_{ri}\right)_{r\in\IN,i\in\left\{ \frac{r+1}{2},\ldots,r\right\} }$
being a null array.

By Lemma \ref{lem:bri_max}, there is a constant $K>0$ such that
\begin{equation}
\left|Y_{ri}\right|=\frac{b_{ri}\left|1-2a_{i}\right|}{\sqrt{r}}\leq\frac{K}{\sqrt{r}}\label{eq:UB_Yri}
\end{equation}
holds for all $r$ and all $i\in\left\{ \frac{r+1}{2},\ldots,r\right\} $.
Therefore, we have 
\[
\IE\left(\left|Y_{ri}\right|\wedge1\right)=\IE\left(\frac{b_{ri}\left|1-2a_{i}\right|}{\sqrt{r}}\wedge1\right)\leq\frac{K}{\sqrt{r}}\xrightarrow[r\rightarrow\infty]{}0
\]
uniformly over all $r\in\IN,i\in\left\{ \frac{r+1}{2},\ldots,r\right\} $.
This proves Statement 1 in Condition \ref{cond:Feller-L=00003D0000E9vy}.

\subsubsection*{Statement 2}

Let $\varepsilon>0$. By (\ref{eq:UB_Yri}), we have for all $r>K^{2}/\varepsilon^{2}$
\[
\left|Y_{ri}\right|<\varepsilon
\]
for all $i\in\left\{ \frac{r+1}{2},\ldots,r\right\} $. Therefore,
\[
\sum_{i=\frac{r+1}{2}}^{r}\IP\left\{ \left|Y_{ri}\right|>\varepsilon\right\} =0
\]
for all $r>K^{2}/\varepsilon^{2}$. This proves Statement 2 in Condition
\ref{cond:Feller-L=00003D0000E9vy}.

\subsubsection*{Statement 3}

By (\ref{eq:UB_Yri}), for all $r\geq K^{2}$, 
\begin{align*}
\sum_{i=\frac{r+1}{2}}^{r}\IE\left(Y_{ri}\II_{\left\{ \left|Y_{ri}\right|\leq1\right\} }\right) & =\sum_{i=\frac{r+1}{2}}^{r}\IE\left(Y_{ri}\right)=\frac{1}{\sqrt{r}}\sum_{i=\frac{r+1}{2}}^{r}b_{ri}\IE\left(1-2a_{i}\right)\\
 & =\IE\left(1-2a_{1}\right)\frac{1}{\sqrt{r}}\sum_{i=\frac{r+1}{2}}^{r}b_{ri}\xrightarrow[r\rightarrow\infty]{}\sqrt{\frac{2}{\pi}}\left(1-2\IE a_{1}\right)=:b\in\IR.
\end{align*}

\subsubsection*{Statement 4}

We again employ (\ref{eq:UB_Yri}) to calculate that for all $r\geq K^{2}$
\begin{align}
\sum_{i=\frac{r+1}{2}}^{r}\IV\left(Y_{ri}\II_{\left\{ \left|Y_{ri}\right|\leq1\right\} }\right) & =\sum_{i=\frac{r+1}{2}}^{r}\IV Y_{ri}=\sum_{i=\frac{r+1}{2}}^{r}\left(\IE Y_{ri}^{2}-\left(\IE Y_{ri}\right)^{2}\right)\nonumber \\
 & \approx\frac{1}{r}\sum_{i=\frac{r+1}{2}}^{r}b_{ri}^{2}\left(\IE\left(1-2a_{i}\right)^{2}-\left(1-2\IE a_{i}\right)^{2}\right)\nonumber \\
 & =\frac{1}{r}\sum_{i=\frac{r+1}{2}}^{r}b_{ri}^{2}\left(4\IE a_{i}^{2}-4\left(\IE a_{i}\right)^{2}\right)\nonumber \\
 & =\frac{4\IV a_{1}}{r}\sum_{i=\frac{r+1}{2}}^{r}b_{ri}^{2}\xrightarrow[r\rightarrow\infty]{}0.\label{eq:var_iid}
\end{align}
In the last step, we used Lemma \ref{lem:sum_bri^k}, by which $\sum_{i=\frac{r+1}{2}}^{r}b_{ri}^{2}=\Theta\left(\sqrt{r}\right)$.

In virtue of the verification of Condition \ref{cond:Feller-L=00003D0000E9vy},
we have thus shown the limit theorem 
\[
\frac{\lambda_{r}\left(\mathbf{a}\right)}{\sqrt{r}}\xrightarrow[r\rightarrow\infty]{\textup{d}}\mathcal{N}\left(\sqrt{\frac{2}{\pi}}s\left(1-2\IE a_{1}\right),0\right)=\delta_{\sqrt{\frac{2}{\pi}}\left(1-2\IE a_{1}\right)}.
\]
The last statement in Theorem \ref{thm:iid_neq_1/2}, $\IV\lambda_{r}\left(\mathbf{a}\right)=\Theta\left(\sqrt{r}\right)$,
follows from our calculation (\ref{eq:var_iid}), Lemma \ref{lem:sum_bri^k},
and 
\[
\IV\lambda_{r}\left(\mathbf{a}\right)=r\sum_{i=\frac{r+1}{2}}^{r}\IV Y_{ri}=4\IV a_{1}\sum_{i=\frac{r+1}{2}}^{r}b_{ri}^{2}=\Theta\left(\sqrt{r}\right).
\]
This concludes the proof of Theorem \ref{thm:iid_neq_1/2}.

\section{Proof of Theorem \ref{thm:iid_eq_1/2}}

Using the lemmas from the last section, we now show Theorem \ref{thm:iid_eq_1/2}
by verifying Condition \ref{cond:Feller-L=00003D0000E9vy} for the
triangular array $\left(Y_{ri}\right)_{r\in\IN,i\in\left\{ \frac{r+1}{2},\ldots,r\right\} }$
defined by $Y_{ri}:=\frac{b_{ri}\left(1-2a_{i}\right)}{s_{r}}$ under
the assumption that for all $i$ $\IE a_{i}=1/2$ holds.

\subsubsection*{Statement 1}

We have to show that $\left(Y_{ri}\right)_{r\in\IN,i\in\left\{ \frac{r+1}{2},\ldots,r\right\} }$
is a null array. By Lemma \ref{lem:bri_max}, there is a constant
$K>0$ such that 
\begin{equation}
\left|Y_{ri}\right|=\frac{b_{ri}\left|1-2a_{i}\right|}{s_{r}}\leq\frac{K}{s_{r}}\label{eq:UB_Yri-1}
\end{equation}
holds for all $r$ and all $i=\frac{r+1}{2}$. We recall the definition
of $s_{r}:=\sqrt{\sum_{i=\frac{r+1}{2}}^{r}b_{ri}^{2}}$. Using Lemma
\ref{lem:sum_bri^k}, we obtain $s_{r}=\Theta\left(r^{1/4}\right)$,
and by Lemma \ref{lem:bri_max} we have 
\[
\IE\left(\left|Y_{ri}\right|\wedge1\right)=\IE\left(\frac{b_{ri}\left|1-2a_{i}\right|}{s_{r}}\wedge1\right)=\Theta\left(1/r^{1/4}\right)\xrightarrow[r\rightarrow\infty]{}0
\]
uniformly over all $r\in\IN,i\in\left\{ \frac{r+1}{2},\ldots,r\right\} $.
This proves Statement 1 in Condition \ref{cond:Feller-L=00003D0000E9vy}.

\subsubsection*{Statement 2}

Let $\varepsilon>0$. $s_{r}=\Theta\left(r^{1/4}\right)$ implies
$\text{\ensuremath{\tau}}:=\liminf_{r\rightarrow\infty}\frac{s_{r}}{r^{1/4}}>0$.
Hence there is an $r_{0}\in\IN$ such that for all $r\geq r_{0}$
the inequality $\tau/2<s_{r}/r^{1/4}$ holds. It follows from (\ref{eq:UB_Yri-1})
that, for all $r>r_{0}\vee\left(\frac{2K}{\tau\varepsilon}\right)^{4}$
and all $i\in\left\{ \frac{r+1}{2},\ldots,r\right\} $, we have 
\begin{equation}
\left|Y_{ri}\right|\leq\frac{K}{s_{r}}<\frac{2K}{\tau r^{1/4}}<\varepsilon.\label{eq:Yri_eps}
\end{equation}
As a consequence, 
\[
\sum_{i=\frac{r+1}{2}}^{r}\IP\left\{ \left|Y_{ri}\right|>\varepsilon\right\} =0
\]
for all $r>r_{0}\vee\left(\frac{2K}{\tau\varepsilon}\right)^{4}$.
This proves Statement 2 in Condition \ref{cond:Feller-L=00003D0000E9vy}.

\subsubsection*{Statement 3}

Using (\ref{eq:Yri_eps}) with $\varepsilon=1$, we have for all $r>r_{0}\vee\left(\frac{2K}{\tau}\right)^{4}$
\begin{align*}
\sum_{i=\frac{r+1}{2}}^{r}\IE\left(Y_{ri}\II_{\left\{ \left|Y_{ri}\right|\leq1\right\} }\right) & =\sum_{i=\frac{r+1}{2}}^{r}\IE\left(Y_{ri}\right)=\frac{1}{s_{r}}\sum_{i=\frac{r+1}{2}}^{r}b_{ri}\IE\left(1-2a_{i}\right)\\
 & =\IE\left(1-2a_{1}\right)\frac{1}{s_{r}}\sum_{i=\frac{r+1}{2}}^{r}b_{ri}=0=:b\in\IR.
\end{align*}

\subsubsection*{Statement 4}

We again employ (\ref{eq:Yri_eps}) to calculate that for all $r>r_{0}\vee\left(\frac{2K}{\tau}\right)^{4}$
\begin{align}
\sum_{i=\frac{r+1}{2}}^{r}\IV\left(Y_{ri}\II_{\left\{ \left|Y_{ri}\right|\leq1\right\} }\right) & =\sum_{i=\frac{r+1}{2}}^{r}\IV Y_{ri}=\sum_{i=\frac{r+1}{2}}^{r}\left(\IE Y_{ri}^{2}-\left(\IE Y_{ri}\right)^{2}\right)\nonumber \\
 & \approx\frac{1}{s_{r}^{2}}\sum_{i=\frac{r+1}{2}}^{r}b_{ri}^{2}\left(\IE\left(1-2a_{i}\right)^{2}-\left(1-2\IE a_{i}\right)^{2}\right)\nonumber \\
 & =\frac{1}{s_{r}^{2}}\sum_{i=\frac{r+1}{2}}^{r}b_{ri}^{2}\left(4\IE a_{i}^{2}-4\left(\IE a_{i}\right)^{2}\right)\nonumber \\
 & =\frac{4\IV a_{1}}{s_{r}^{2}}\sum_{i=\frac{r+1}{2}}^{r}b_{ri}^{2}=4\IV a_{1}.\label{eq:var_iid-1}
\end{align}
In the last step, we used the definition of $s_{r}$.

We have verified Condition \ref{cond:Feller-L=00003D0000E9vy} and
thus shown the limit theorem 
\[
\frac{\lambda_{r}\left(\mathbf{a}\right)}{s_{r}}\xrightarrow[r\rightarrow\infty]{\textup{d}}\mathcal{N}\left(0,4\IV a_{1}\right).
\]
The last statement in Theorem \ref{thm:iid_eq_1/2}, $\IV\lambda_{r}\left(\mathbf{a}\right)=\Theta\left(\sqrt{r}\right)$,
follows from our calculation (\ref{eq:var_iid-1}), Lemma \ref{lem:sum_bri^k},
and 
\[
\IV\lambda_{r}\left(\mathbf{a}\right)=s_{r}^{2}\sum_{i=\frac{r+1}{2}}^{r}\IV Y_{ri}=4\IV a_{1}\sum_{i=\frac{r+1}{2}}^{r}b_{ri}^{2}=\Theta\left(\sqrt{r}\right).
\]
This concludes the proof of Theorem \ref{thm:iid_eq_1/2}.

\section{Proof of Theorem \ref{thm:CWM}}

Theorems \ref{thm:CWM} and \ref{thm:CWM_ext} are limit theorems
for sums of triangular arrays of dependent random variables. Therefore,
we cannot employ the classic central limit theorem by Feller-L�vy.
Instead, we will use the method of moments. Let $\left(Y_{ni}\right)_{n\in\IN,i\in\left\{ 1,\ldots,n\right\} }$
be a triangular array of random variables. The method of moments consists
of showing the convergence of moments 
\[
\IE\left(\sum_{i=1}^{n}Y_{ni}\right)^{k}\xrightarrow[n\rightarrow\infty]{}m_{k}\in\IR
\]
for each $k\in\IN$. Provided the constants $m_{k}$ satisfy appropriate
upper bounds, e.g. 
\begin{equation}
m_{k}\leq AC^{k}k!,\quad k\in\IN,\label{eq:moderate_growth}
\end{equation}
for fixed constants $A,C\in\IR$, the convergence of the moments implies
convergence in distribution of the sequence $\left(\sum_{i=1}^{n}Y_{ni}\right)_{n\in\IN}$
to a limiting distribution $\mu$ with moments of all orders $k$
given by the limits $m_{k}$ above. Moreover, the limiting distribution
$\mu$ is uniquely determined by its moments $m_{k}$. Consult e.g.\!
Example 4 on p.\! 205 of \cite{RS1975} for a proof of these statements.
We note that all limiting distributions demonstrated in this article
satisfy the condition (\ref{eq:moderate_growth}), so the method of
moments is applicable.

Let $\beta\leq1$. We define $\left(Y_{ri}\right)_{r\in\IN,i\in\left\{ \frac{r+1}{2},\ldots,r\right\} }$
by $Y_{ri}:=\frac{b_{ri}\left(1-2a_{i}\right)}{s_{r}}=\frac{b_{ri}X_{ri}}{s_{r}}$
for all $r\in\IN$ and all $i\in\left\{ \frac{r+1}{2},\ldots,r\right\} $.
To calculate the limits of the moments $\IE\left(\sum_{i=\frac{r+1}{2}}^{r}Y_{ri}\right)^{k}$,
we have to evaluate asymptotically sums of the type 
\[
\sum_{i_{1},\ldots,i_{k}=\frac{r+1}{2}}^{r}\IE Y_{ri_{1}}\cdots Y_{ri_{k}}.
\]
Therefore, we will need to know the asymptotic behaviour of correlations
such as 
\[
\IE Y_{ri_{1}}\cdots Y_{ri_{k}}=\frac{b_{ri_{1}}\cdots b_{ri_{k}}}{s_{r}^{k}}\IE X_{ri_{1}}\cdots X_{ri_{k}}.
\]

We thus start the proof of Theorem \ref{thm:CWM} by presenting asymptotic
expressions for correlations of the type above in the CWM. These facts
are well known and presented here for the convenience of the reader.
The Curie-Weiss random variables $\left(X_{r\frac{r+1}{2}},\ldots,X_{rr}\right)$
are exchangeable. It follows that for any set of indices $\left\{ i_{1},\ldots,i_{k}\right\} $
with cardinality $k$ we have $\IE X_{ri_{1}}\cdots X_{ri_{k}}=\IE X_{r\frac{r+1}{2}}\cdots X_{r\frac{r+1}{2}+k}$,
i.e.\! the correlation $\IE X_{ri_{1}}\cdots X_{ri_{k}}$ depends
only on the number of different random variables included and not
their specific identities. We are thus free to look only at the first
$k$ random variables. The value of these correlations depends on
the regime of the model, and we have the following proposition. Below
$\left(k-1\right)!!$ stands for $\left(k-1\right)\left(k-3\right)\cdots5\cdot3\cdot1$,
and we set $\bar{\beta}:=\frac{\beta}{1-\beta}$ for all $\beta<1$. 
\begin{prop}
\label{prop:correlations}Let $k\in\IN$. If $k$ is odd, then $\IE X_{r\frac{r+1}{2}}\cdots X_{r\frac{r+1}{2}+k}$
equals $0$ for all values of $\beta\geq0$. If $k$ is even, then
we have: 
\begin{enumerate}
\item If $\beta<1$, then 
\[
\IE X_{r\frac{r+1}{2}}\cdots X_{r\frac{r+1}{2}+k}\approx\left(k-1\right)!!\,\bar{\beta}^{k/2}\left(\frac{2}{r}\right)^{k/2}.
\]
\item If $\beta=1$, then 
\[
\IE X_{r\frac{r+1}{2}}\cdots X_{r\frac{r+1}{2}+k}\approx c_{k}\left(\frac{2}{r}\right)^{k/4}.
\]
\item If $\beta>1$, then 
\[
\IE X_{r\frac{r+1}{2}}\cdots X_{r\frac{r+1}{2}+k}\xrightarrow[r\rightarrow\infty]{}x\left(\beta\right)^{k}.
\]
\end{enumerate}
The constant $c_{k}$ is given by $12^{k/4}\frac{\Gamma\left(\frac{k+1}{4}\right)}{\Gamma\left(\frac{1}{4}\right)}$,
where $\Gamma$ stands for the gamma function, and $x\left(\beta\right)$
is the positive solution of the equation $\tanh\beta x=x$. 
\end{prop}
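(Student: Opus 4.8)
\emph{Proof proposal.} The plan is to collapse the $k$-point correlation of the mean-field spins into a one-dimensional integral by a Hubbard--Stratonovich (Gaussian) decoupling, and then read off the three regimes from a Laplace-type analysis of that integral. I would first dispose of the odd case purely by symmetry: the canonical ensemble (\ref{eq:CWM}) is invariant under the global spin flip $x\mapsto -x$, so $\IE X_{r\frac{r+1}{2}}\cdots X_{r\frac{r+1}{2}+k}=(-1)^{k}\,\IE X_{r\frac{r+1}{2}}\cdots X_{r\frac{r+1}{2}+k}$, which forces the correlation to vanish exactly whenever $k$ is odd, for every $\beta\geq 0$. From here on $k$ is even; write $n:=\frac{r+1}{2}$ for the number of Curie--Weiss variables, so that $1/n\sim 2/r$ as $r\to\infty$.

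Next I would linearise the quadratic term in (\ref{eq:CWM}) via $e^{a^{2}/2}=(2\pi)^{-1/2}\int_{\IR}e^{-s^{2}/2+as}\,ds$ with $a=\sqrt{\beta/n}\,\sum_i x_i$. After the substitution $t=\sqrt{\beta/n}\,s$ the $n$ spins decouple; summing each untagged spin against $e^{tx_i}$ produces a factor $2\cosh t$, each of the $k$ tagged spins produces $2\sinh t$, and the overall factor $2^{n}$ cancels in the ratio. This yields the exact representation
\[
\IE X_{r\frac{r+1}{2}}\cdots X_{r\frac{r+1}{2}+k}=\frac{\int_{\IR}e^{-n\phi_{\beta}(t)}(\tanh t)^{k}\,dt}{\int_{\IR}e^{-n\phi_{\beta}(t)}\,dt},\qquad \phi_{\beta}(t):=\frac{t^{2}}{2\beta}-\log\cosh t,
\]
equivalently: conditional on an auxiliary field $t$ sampled with density proportional to $e^{-n\phi_\beta(t)}$, the spins are i.i.d.\ with mean $\tanh t$. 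Since $\phi_\beta(t)\to\infty$ at rate $t^{2}/(2\beta)$, both integrals converge and the contribution of $\{\,|t-t_\ast|\geq\varepsilon\,\}$, for $t_\ast$ any minimiser, is exponentially small, so the asymptotics are governed by the behaviour of $\phi_\beta$ near its minimisers.

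Then I would run Laplace's method in the three regimes of the critical-point equation $\phi_\beta'(t)=t/\beta-\tanh t=0$. For $\beta<1$ the unique minimiser is $t=0$ with $\phi_\beta''(0)=(1-\beta)/\beta>0$; rescaling $t=u/\sqrt n$ gives $e^{-n\phi_\beta(t)}\to e^{-\frac{1-\beta}{2\beta}u^{2}}$ and $(\tanh t)^{k}\sim u^{k}n^{-k/2}$, so the ratio tends to $n^{-k/2}$ times the $k$-th moment of $\mathcal N(0,\bar{\beta})$, namely $(k-1)!!\,\bar{\beta}^{k/2}n^{-k/2}$. For $\beta=1$ one has $\phi_1''(0)=0$ and the Taylor expansion $\phi_1(t)=t^{4}/12+O(t^{6})$, a quartic minimum; rescaling $t=u/n^{1/4}$ gives $e^{-n\phi_1(t)}\to e^{-u^{4}/12}$, $(\tanh t)^{k}\sim u^{k}n^{-k/4}$, and the substitution $v=u^{4}/12$ evaluates $\int_{\IR}u^{k}e^{-u^{4}/12}\,du\,\big/\,\int_{\IR}e^{-u^{4}/12}\,du$ to $12^{k/4}\,\Gamma(\tfrac{k+1}{4})/\Gamma(\tfrac14)=c_k$, giving order $n^{-k/4}$. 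For $\beta>1$ the origin is a local maximum and $\phi_\beta$ has two symmetric non-degenerate minima $\pm t_\ast$ with $t_\ast/\beta=\tanh t_\ast$; putting $x(\beta):=\tanh t_\ast$ one checks $\tanh(\beta x(\beta))=x(\beta)$, which is exactly the fixed-point equation in the statement. Each well carries the same Gaussian mass while $(\tanh t)^{k}$ is frozen at $(\pm x(\beta))^{k}$, so the ratio converges to $\tfrac12\bigl(x(\beta)^{k}+(-x(\beta))^{k}\bigr)=x(\beta)^{k}$ for even $k$. Replacing $1/n$ by $2/r$ in the first two cases yields precisely the claimed expressions.

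The delicate but essentially routine part is making these Laplace estimates uniform and rigorous: one needs the quantitative tail bounds on $e^{-n\phi_\beta}$ that justify truncating to a neighbourhood of the minimiser, and control of the subexponential prefactor $(\tanh t)^{k}$ across the rescaling. The genuinely non-standard point is the degenerate regime $\beta=1$, where the usual quadratic saddle-point estimate fails and one must pass through the quartic normal form of $\phi_1$; and in the regime $\beta>1$ one must verify that both wells contribute symmetrically and that the Gaussian curvature factors cancel in the numerator/denominator ratio. None of this is new — these are classical facts about the Curie--Weiss model — and a reader wanting full detail can extract the scaling limit of the law with density $\propto e^{-n\phi_\beta}$ in each regime from Sections IV.4 and V.9 of \cite{Ell1985}.
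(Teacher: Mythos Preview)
Your proof sketch is correct. The Hubbard--Stratonovich decoupling followed by a Laplace analysis of $\phi_\beta(t)=t^2/(2\beta)-\log\cosh t$ is the standard route to these Curie--Weiss correlation asymptotics, and your treatment of the three regimes---quadratic minimum at $0$ for $\beta<1$, quartic minimum at $0$ for $\beta=1$, and two non-degenerate wells at $\pm t_\ast$ for $\beta>1$---is accurate, including the identification $n=(r+1)/2\sim r/2$ and the gamma-function evaluation of $c_k$.

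As for comparison with the paper: the paper does not actually prove this proposition. Its entire proof reads ``This is Theorem 5.17 in \cite{KiMM} and its proof can be found there.'' So you have supplied genuine content where the paper defers to an external reference. Your argument is essentially the classical one (and presumably close to what is in the cited survey); the only thing one might add for a fully self-contained write-up is the dominated-convergence justification for interchanging the $n\to\infty$ limit with the integral after rescaling, which you have flagged as routine.
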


\begin{proof}
This is Theorem 5.17 in \cite{KiMM} and its proof can be found there. 
\end{proof}
The second ingredient we need in order to calculate the moments 
\begin{align*}
M_{k} & :=\IE\left(\sum_{i=\frac{r+1}{2}}^{r}Y_{ri}\right)^{k}=\frac{1}{s_{r}^{k}}\sum_{i_{1},\ldots,i_{k}=\frac{r+1}{2}}^{r}b_{ri_{1}}\cdots b_{ri_{k}}\,\IE X_{ri_{1}}\cdots X_{ri_{k}}
\end{align*}
is a count of the number of what we will call profiles. If we have
an index vector $I=\left(i_{1},\ldots,i_{k}\right)$, where all indices
belong to $\left\{ 1,\ldots,k\right\} $ (possibly including repeated
indices), there is a profile $\text{\ensuremath{\underbar{r}}}=\left(r_{1},\ldots,r_{k}\right)\in\left\{ 0,1,\ldots,k\right\} ^{k}$
which registers the number of indices in $I$ according to their multiplicity:
$r_{j}$ stands for the number of indices in $I$ that occur exactly
$j$ times for $j\in\left\{ 1,\ldots,k\right\} $. Let $\Pi^{k}$
be the set of all profiles of length $k$. All profiles $\text{\ensuremath{\underbar{r}}}\in\Pi^{k}$
satisfy the identity $\sum_{j=1}^{k}jr_{j}=k$. Also, the number of
profiles is finite. A simple upper bound on $\left|\Pi^{k}\right|$
is given by $\left(k+1\right)^{k}$.

We will need to know how many index vectors there are for each profile.
The following lemma provides the answer: 
\begin{lem}
\label{lem:multiplicity}Let $k\in\IN$ and $\text{\ensuremath{\underbar{r}}}\in\Pi^{k}$.
The number of index vectors $\left(i_{1},\ldots,i_{k}\right)$ with
entries in $\left\{ 1,\ldots,k\right\} $ and profile $\text{\ensuremath{\underbar{r}}}$
is given by 
\[
\frac{k!}{r_{1}!\cdots r_{k}!\,1!^{r_{1}}\cdots k!^{r_{k}}}.
\]
\end{lem}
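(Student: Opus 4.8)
The plan is to recognise the right‑hand side as the number of ways to partition the $k$ positions of the vector into unlabelled groups, of which exactly $r_j$ have size $j$ for each $j\in\{1,\ldots,k\}$, and then to evaluate that count by a standard overcounting argument. For the reduction, associate with an index vector $(i_1,\ldots,i_k)$ the partition of the position set $\{1,\ldots,k\}$ whose blocks are the fibres of the map $t\mapsto i_t$, so that two positions lie in the same block precisely when they carry the same index. By the definition of the profile this partition has exactly $r_j$ blocks of size $j$ for every $j$ (and $\sum_j j r_j=k$ makes this consistent), and conversely every such partition arises in this way. Since two index vectors induce the same partition if and only if one is obtained from the other by relabelling the index values, counting index vectors of profile $\underbar{r}$ — which is the combinatorial factor the moment sums require, once the choice of the actual distinct index values has been factored out — reduces to counting partitions of $\{1,\ldots,k\}$ with $r_j$ blocks of size $j$.

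To count those partitions I would overcount via permutations. Given $\sigma\in S_k$, write the word $\sigma(1)\sigma(2)\cdots\sigma(k)$ and slice it from left to right into consecutive chunks of prescribed sizes $\underbrace{1,\ldots,1}_{r_1},\underbrace{2,\ldots,2}_{r_2},\ldots,\underbrace{k,\ldots,k}_{r_k}$; forgetting the order of the chunks and the order within each chunk produces a partition of $\{1,\ldots,k\}$ with $r_j$ blocks of size $j$. Every such partition is produced this way, and a fixed one is produced by exactly $\bigl(\prod_{j=1}^{k}r_j!\bigr)\bigl(\prod_{j=1}^{k}(j!)^{r_j}\bigr)$ permutations: the first factor matches the $r_j$ blocks of each common size $j$ to the $r_j$ chunk‑slots reserved for size $j$, and the second reorders the elements within each block. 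Dividing the $k!$ permutations by this common fibre size gives
\[
\frac{k!}{\prod_{j=1}^{k}r_j!\,\prod_{j=1}^{k}(j!)^{r_j}}=\frac{k!}{r_1!\cdots r_k!\,1!^{r_1}\cdots k!^{r_k}},
\]
which is the asserted formula.

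I do not expect a substantive obstacle: the argument is elementary, and the only point demanding care is the bookkeeping of which symmetries have already been divided out, so that the fibre size is exactly $\prod_j r_j!\prod_j(j!)^{r_j}$ and nothing more. As a check I would verify the two extremes $\underbar{r}=(k,0,\ldots,0)$ (all singletons) and $\underbar{r}=(0,\ldots,0,1)$ (a single block), each of which must give $1$, together with a mixed case such as $k=3$, $\underbar{r}=(1,1,0)$, where the formula returns $\frac{3!}{1!\,1!\,0!\cdot 1!^{1}2!^{1}3!^{0}}=3$, matching the three partitions of $\{1,2,3\}$ into a singleton and a pair. If one prefers to avoid permutations, the same value follows by selecting the blocks class by class — first the $r_1$ singletons, then the $r_2$ pairs, and so on — as a product of binomial coefficients, and then dividing by $\prod_{j=1}^{k}r_j!$ because the order in which equal‑sized blocks were picked is immaterial.
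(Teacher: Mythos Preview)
Your argument is correct and essentially matches the paper's: both proofs reduce the count to the number of set partitions of the position set $\{1,\ldots,k\}$ with $r_j$ blocks of size $j$, and both evaluate this by a standard overcounting. The paper's organisation is exactly the alternative you mention at the end --- first separate the positions by block-size class via a multinomial coefficient, then partition each class $A_j$ into $r_j$ unordered blocks of size $j$ --- whereas your primary presentation via permutations and slicing is a slightly slicker packaging of the same count; your explicit remark that the lemma is really counting partitions (with the choice of actual index values factored out for use in the moment sums) is a helpful clarification that the paper leaves implicit.
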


\begin{proof}
Let $k\in\IN$ and $\text{\ensuremath{\underbar{r}}}\in\Pi^{k}$.
We construct an index vector $I=\left(i_{1},\ldots,i_{k}\right)$
with entries in $\left\{ 1,\ldots,k\right\} $ and profile $\text{\ensuremath{\underbar{r}}}$
in two steps: 
\begin{enumerate}
\item We first partition the set $\left\{ 1,\ldots,k\right\} $ into $k$
sets $A_{j}$, indexed by $j=1,\ldots,k$, with cardinality $\left|A_{j}\right|=jr_{j}$
in accordance with the profile $\text{\ensuremath{\underbar{r}}}$.
Each set $A_{j}$ indicates the positions $\ell\in\left\{ 1,\ldots,k\right\} $
such that the index $i_{\ell}$ occurs exactly $j$ times in $I$.
There are 
\begin{equation}
\left(\begin{array}{cccc}
 & k!\\
r_{1}! & \left(2r_{2}\right)! & \ldots & \left(kr_{k}\right)!
\end{array}\right)\label{eq:multinom}
\end{equation}
ways to partition $\left\{ 1,\ldots,k\right\} $ as described. 
\item We join the selected the positions $\ell\in A_{j}$ such that the
$i_{\ell}$ have the same value, i.e.\! we have to form partitions
of $A_{j}$ into subsets of $j$ elements each for all $j=1,\ldots,k$.
Let the elements of $A_{j}$ be arranged in ascending order. We form
the first set of the partition of $A_{j}$, which contains the smallest
element of $A_{j}$, by selecting $j-1$ from all remaining $jr_{j}-1$
elements of $A_{j}$. There are 
\[
\frac{\left(jr_{j}-1\right)\cdots\left(j\left(r_{j}-1\right)+1\right)}{\left(j-1\right)!}
\]
ways to make this selection. Once the first $n\in\left\{ 1,\ldots,r_{j}-1\right\} $
of the subsets of $A_{j}$ have been selected in this fashion, we
generate subset $n+1$ by choosing the smallest of the elements not
yet assigned to any of the previous subsets and choosing $j-1$ from
all remaining $j\left(r_{j}-n\right)-1$ elements of $A_{j}$. There
are 
\[
\frac{\left(j\left(r_{j}-n\right)-1\right)\cdots\left(j\left(r_{j}-n-1\right)+1\right)}{\left(j-1\right)!}
\]
ways to make this selection. Thus, we have defined the algorithm to
determine the partition of $A_{j}$ into subsets of cardinality $j$.
We observe that there are 
\begin{equation}
\frac{\left(jr_{j}-1\right)\cdots\left(j\left(r_{j}-1\right)+1\right)}{\left(j-1\right)!}\cdots\frac{\left(j-1\right)\cdots1}{\left(j-1\right)!}=\frac{\left(jr_{j}\right)!}{j^{r_{j}}r_{j}!\,\left(j-1\right)!^{r_{j}}}=\frac{\left(jr_{j}\right)!}{r_{j}!\,j!^{r_{j}}}\label{eq:partition}
\end{equation}
such partitions. 
\end{enumerate}
We multiply (\ref{eq:multinom}) and (\ref{eq:partition}) for each
$j$ and obtain 
\[
\frac{k!}{r_{1}!\left(2r_{2}\right)!\cdots\left(kr_{k}\right)!}\frac{r_{1}!\left(2r_{2}\right)!\cdots\left(kr_{k}\right)!}{r_{1}!\,1!^{r_{1}}\cdots r_{k}!\,k!^{r_{k}}}=\frac{k!}{r_{1}!\cdots r_{k}!\,1!^{r_{1}}\cdots k!^{r_{k}}}.
\]
\end{proof}
We define the correlation $\IE X\left(\text{\ensuremath{\underbar{r}}}\right)$
corresponding to a profile $\text{\ensuremath{\underbar{r}}}\in\Pi^{k}$
by choosing any $\sum_{j=1}^{k}r_{j}$ different $X_{ri}$. Then raise
the first $r_{1}$ of them to the power 1, the next $r_{2}$ to the
power 2, etc. Take the product of all these powers and take their
expectation. This expectation is what we will refer to as $\IE X\left(\text{\ensuremath{\underbar{r}}}\right)$.
Due to the exchangeability of the Curie-Weiss random variables, the
identity of the selected $X_{ri}$ in this definition is inconsequential.

Using Lemma \ref{lem:multiplicity}, we can express $M_{k}$ as 
\begin{align}
M_{k} & =\frac{1}{s_{r}^{k}}\sum_{i_{1},\ldots,i_{k}=\frac{r+1}{2}}^{r}b_{ri_{1}}\cdots b_{ri_{k}}\IE X_{ri_{1}}\cdots X_{ri_{k}}\nonumber \\
 & \approx\frac{1}{s_{r}^{k}}\sum_{\text{\ensuremath{\underbar{r}}}\in\Pi^{k}}\frac{k!}{r_{1}!\cdots r_{k}!\,1!^{r_{1}}\cdots k!^{r_{k}}}\left(\sum_{i=\frac{r+1}{2}}^{r}b_{ri}\right)^{r_{1}}\cdots\left(\sum_{i=\frac{r+1}{2}}^{r}b_{ri}^{k}\right)^{r_{k}}\IE X\left(\text{\ensuremath{\underbar{r}}}\right).\label{eq:moments}
\end{align}
Each summand in (\ref{eq:moments}) is indexed by a profile $\text{\ensuremath{\underbar{r}}}\in\Pi^{k}$.
Thus there are only finitely many of them (as noted above, at most$\left(k+1\right)^{k}$,
which is independent of $r$), and each summand depends on $r$. Most
of these summands do not contribute to the moment $M_{k}$ in the
sense that the corresponding summand goes to 0 as $r\rightarrow\infty$.
Our first task is to determine which of the summands contribute and
discard the others. Secondly, we calculate the limit of the contributing
summands and show they converge to the moment of order $k$ of the
claimed limiting distribution.

Let $\beta<1$. The limit given in Theorem \ref{thm:CWM} is standard
normal, and we have to show that $M_{k}$ converges to the moment
of order $k$ of a standard normal distribution. The moments of the
standard normal are given by 
\[
\begin{cases}
\left(k-1\right)!!, & k\textup{ even},\\
0, & k\textup{ odd.}
\end{cases}
\]
First we note that due to the symmetric nature of the Curie-Weiss
distribution defined in (\ref{eq:CWM}), all odd moments $M_{k}$
are 0. Let $k$ be even. Employing Proposition \ref{prop:correlations}
and noting that for all $i\in\left\{ \frac{r+1}{2},\ldots,r\right\} $
and all $j\in\IN_{0}$ 
\[
X_{ri}^{j}=\begin{cases}
1 & \textup{if }j\textup{ is even,}\\
X_{ri} & \text{otherwise,}
\end{cases}
\]
the correlation $\IE X\left(\text{\ensuremath{\underbar{r}}}\right)$
can be expressed for each profile $\text{\ensuremath{\underbar{r}}}\in\Pi^{k}$
as 
\[
\left(o-1\right)!!\bar{\beta}^{o/2}\left(\frac{2}{r}\right)^{o/2},
\]
where $o$ stands for the sum of $r_{j}$ over all odd $j$. By (\ref{eq:moments}),
we have 
\[
M_{k}\approx\frac{1}{s_{r}^{k}}\sum_{\text{\ensuremath{\underbar{r}}}\in\Pi^{k}}\frac{k!}{r_{1}!\cdots r_{k}!\,1!^{r_{1}}\cdots k!^{r_{k}}}\left(\sum_{i=\frac{r+1}{2}}^{r}b_{ri}\right)^{r_{1}}\cdots\left(\sum_{i=\frac{r+1}{2}}^{r}b_{ri}^{k}\right)^{r_{k}}\left(o-1\right)!!\bar{\beta}^{o/2}\left(\frac{2}{r}\right)^{o/2}.
\]
We inspect the factors in each summand which depend on $r$. These
are: 
\[
\frac{1}{s_{r}^{k}}\left(\sum_{i=\frac{r+1}{2}}^{r}b_{ri}\right)^{r_{1}}\cdots\left(\sum_{i=\frac{r+1}{2}}^{r}b_{ri}^{k}\right)^{r_{k}}\left(\frac{1}{r}\right)^{o/2}=\Theta\left(\frac{1}{r^{k/4}}\right)\Theta\left(r^{\frac{1}{2}\sum_{j=1}^{k}r_{j}}\right)\frac{1}{r^{o/2}}.
\]
Above we used the definition of $s_{r}$ and the asymptotic expression
provided by Lemma \ref{lem:sum_bri^k}. The powers of $r$ are 
\begin{equation}
-\frac{k}{4}+\frac{1}{2}\sum_{j=1}^{k}r_{j}-\frac{o}{2}=-\frac{k}{4}+\frac{1}{2}\sum_{j=1}^{k/2}r_{2j}.\label{eq:powers_r}
\end{equation}
The summand converges (to a positive constant or to 0) if and only
if (\ref{eq:powers_r}) is smaller or equal to 0. Since we have 
\[
k=\sum_{j=1}^{k}jr_{j}\geq2\sum_{j=1}^{k}r_{2j},
\]
all summands in $M_{k}$ converge. Moreover, as (\ref{eq:powers_r})
being smaller or equal to 0 is equivalent to the last inequality presented,
(\ref{eq:powers_r}) equals 0 if and only if $2r_{2}=k$. Hence, the
only summand in (\ref{eq:moments}) which contributes is the one corresponding
to the profile $\left(0,k/2,0,\ldots,0\right)$. For all other profiles,
(\ref{eq:powers_r}) is negative, and thus the summand converges to
0. We have thus shown 
\begin{align*}
M_{k} & \approx\frac{1}{s_{r}^{k}}\frac{k!}{\left(k/2\right)!\,2!^{k/2}}\left(\sum_{i=\frac{r+1}{2}}^{r}b_{ri}^{2}\right)^{k/2}\\
 & =\frac{k!}{\left(k/2\right)!\,2!^{k/2}}=\left(k-1\right)!!.
\end{align*}
This proves the moments $M_{k}$ converge to the moments of the standard
normal distribution and concludes the proof of the statement for $\beta<1$
in Theorem \ref{thm:CWM}.

Now let $\beta=1$. We analyse the moments $M_{k}$ given in (\ref{eq:moments}).
Odd moments are 0, so let $k$ be even. According to Proposition \ref{prop:correlations},
the correlations $\IE X\left(\text{\ensuremath{\underbar{r}}}\right)$
can be expressed as 
\[
c_{o}\left(\frac{2}{r}\right)^{o/4},
\]
where $o$ again stands for the sum of $r_{j}$ over all odd $j$.
We inspect the factors in each summand in (\ref{eq:moments}) which
depend on $r$. These are: 
\[
\frac{1}{s_{r}^{k}}\left(\sum_{i=\frac{r+1}{2}}^{r}b_{ri}\right)^{r_{1}}\cdots\left(\sum_{i=\frac{r+1}{2}}^{r}b_{ri}^{k}\right)^{r_{k}}\left(\frac{1}{r}\right)^{o/4}=\Theta\left(\frac{1}{r^{k/4}}\right)\Theta\left(r^{\frac{1}{2}\sum_{j=1}^{k}r_{j}}\right)\frac{1}{r^{o/4}}.
\]
The powers of $r$ are 
\begin{equation}
-\frac{k}{4}+\frac{1}{2}\sum_{j=1}^{k}r_{j}-\frac{o}{4}=-\frac{k}{4}+\frac{1}{2}\sum_{j=1}^{k/2}r_{2j}+\frac{1}{4}o.\label{eq:powers_r_beta=00003D00003D1}
\end{equation}
The above expression is smaller or equal to 0 if and only if 
\[
k=\sum_{j=1}^{k}jr_{j}\geq o+2\sum_{j=1}^{k/2}r_{2j}.
\]
Since this equality holds for all profiles, none of the summands diverges.
The contributing summands are those for which equality holds in the
inequality above. The profiles for which equality holds are $\left\{ \left.\left(2j,\frac{k}{2}-j,0,\ldots,0\right)\,\right|\,j=0,\ldots,\frac{k}{2}\right\} $,
i.e.\! each index occurs either once or twice. Therefore, 
\begin{align*}
M_{k} & \approx\frac{1}{s_{r}^{k}}\sum_{j=1}^{k/2}\frac{k!}{\left(2j\right)!\left(\frac{k}{2}-j\right)!\,2!^{k/2-j}}\left(\sum_{i=\frac{r+1}{2}}^{r}b_{ri}\right)^{2j}\left(\sum_{i=\frac{r+1}{2}}^{r}b_{ri}^{2}\right)^{k/2-j}c_{2j}\left(\frac{2}{r}\right)^{j/2}.
\end{align*}
So we have convergence of the moments to fixed limits and as a consequence
also convergence in distribution. The moments given above are \emph{not}
those of a normal distribution, which can be verified by calculating
$M_{2}$ and $M_{4}$ and noting that $\lim_{r\rightarrow\infty}M_{4}\neq3\left(\lim_{r\rightarrow\infty}M_{2}\right)^{2}$,
whereas the moments of any centred normal distribution satisfy this
equality.

Finally, we treat the case $\beta>1$. Since the normalisation of
$\lambda_{r}\left(\mathbf{a}\right)$ is now $\sqrt{r}$, the triangular
array $\left(Y_{ri}\right)_{r\in\IN,i\in\left\{ \frac{r+1}{2},\ldots,r\right\} }$
is defined by $Y_{ri}:=\frac{b_{ri}\left(1-2a_{i}\right)}{\sqrt{r}}=\frac{b_{ri}X_{ri}}{\sqrt{r}}$
and the moments $M_{k}$ by 
\begin{align*}
M_{k} & :=\IE\left(\sum_{i=1}^{n}Y_{ni}\right)^{k}=\frac{1}{r^{k/2}}\sum_{i_{1},\ldots,i_{k}=\frac{r+1}{2}}^{r}b_{ri_{1}}\cdots b_{ri_{k}}\,\IE X_{ri_{1}}\cdots X_{ri_{k}}\\
 & \approx\frac{1}{r^{k/2}}\sum_{\text{\ensuremath{\underbar{r}}}\in\Pi^{k}}\frac{k!}{r_{1}!\cdots r_{k}!\,1!^{r_{1}}\cdots k!^{r_{k}}}\left(\sum_{i=\frac{r+1}{2}}^{r}b_{ri}\right)^{r_{1}}\cdots\left(\sum_{i=\frac{r+1}{2}}^{r}b_{ri}^{k}\right)^{r_{k}}\IE X\left(\text{\ensuremath{\underbar{r}}}\right).
\end{align*}
By Proposition \ref{prop:correlations}, the correlations $\IE X\left(\text{\ensuremath{\underbar{r}}}\right)$
converge to 
\[
x\left(\beta\right)^{o}
\]
in the limit $r\rightarrow\infty$, where $o$ again stands for the
sum of $r_{j}$ over all odd $j$.

The odd moments $M_{k}$ are 0. For the even moments, each summand
of $M_{k}$ has the following factors which depend on $r$: 
\[
\frac{1}{r^{k/2}}\left(\sum_{i=\frac{r+1}{2}}^{r}b_{ri}\right)^{r_{1}}\cdots\left(\sum_{i=\frac{r+1}{2}}^{r}b_{ri}^{k}\right)^{r_{k}}=\frac{1}{r^{k/2}}\Theta\left(r^{\frac{1}{2}\sum_{j=1}^{k}r_{j}}\right).
\]
Thus, the powers of $r$ in each summand are 
\[
-\frac{k}{2}+\frac{1}{2}\sum_{j=1}^{k}r_{j}.
\]
This expression is non-positive for all profiles, so none of the summands
diverges. The inequality holds with equality for the profile $\left(k,0,\ldots,0\right)$,
and it is strict for all other profiles. Hence, the moment $M_{k}$
is asymptotically equal to 
\[
M_{k}\approx\frac{1}{r^{k/2}}\left(\sum_{i=\frac{r+1}{2}}^{r}b_{ri}\right)^{k}x\left(\beta\right)^{k}\approx\frac{1}{r^{k/2}}\left(\frac{2r}{\pi}\right)^{k/2}x\left(\beta\right)^{k}=\left(\left(\frac{2}{\pi}\right)^{1/2}x\left(\beta\right)\right)^{k}=:m^{k}.
\]
Since all odd moments are 0 and all even moments are $m^{k}$, we
conclude that $\frac{\lambda_{r}\left(\mathbf{a}\right)}{\sqrt{r}}$
converges in distribution to $\frac{1}{2}\left(\delta_{-m}+\delta_{m}\right)$.

\section{Proof of Theorem \ref{thm:CWM_ext}}

Contrary to the CWM defined in (\ref{eq:CWM}), the model with an
external magnetic field defined in (\ref{eq:CWM_ext}) features flip
parameters with expectations different than $1/2$, analogous to the
case of independent flip parameters treated in Theorem \ref{thm:iid_neq_1/2}.
The CWM with an external magnetic field shows a different behaviour
and we cannot employ Proposition \ref{prop:correlations} for the
correlations.

Instead of the equation $\tanh\beta x=x$, we have to analyse the
more general equation 
\begin{equation}
\tanh\left(\beta\left(x+h\right)\right)=x.\label{eq:CW_ext}
\end{equation}
For any value of $\beta\geq0$, $h\neq0$, there is a solution $x\left(\beta,h\right)$
of (\ref{eq:CW_ext}) with the same sign as $h$. (We remark that
if $\beta>1$ and $\left|h\right|$ is small enough, then there are
one or two solutions of the opposite signs as well. These are of no
consequence for our analysis. See Sections IV.4 and V.9 of \cite{Ell1985}
for a detailed analysis of this model.) We have the following limit
result for the correlations: 
\begin{prop}
\label{prop:correlations_ext}Let $\beta\geq0$, $h\neq0$, and $x\left(\beta,h\right)$
the solution of (\ref{eq:CW_ext}) defined above. Then we have for
all $k\in\IN$ 
\[
\IE X_{r\frac{r+1}{2}}\cdots X_{r\frac{r+1}{2}+k}\xrightarrow[r\rightarrow\infty]{}x\left(\beta,h\right)^{k}.
\]
\end{prop}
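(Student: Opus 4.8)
The plan is to condition on the total magnetisation and exploit exchangeability. The weight in (\ref{eq:CWM_ext}) depends on a configuration $x\in\left\{-1,1\right\}^{n}$ only through $\sum_{i}x_{i}$, so conditionally on $S_{r}:=\sum_{i=\frac{r+1}{2}}^{r}X_{i}^{(r)}=s$ the law of $\left(X_{i}^{(r)}\right)_{\frac{r+1}{2}\leq i\leq r}$ is the \emph{uniform} law on configurations with that spin; in particular this conditional law does not depend on $\beta$ or $h$. By exchangeability, for any fixed $m\in\IN$ and distinct indices $i_{1},\ldots,i_{m}$,
\[
\IE\Bigl[X_{i_{1}}^{(r)}\cdots X_{i_{m}}^{(r)}\ \Bigm|\ S_{r}=s\Bigr]=g_{r}(s),
\]
where $g_{r}$ (depending on $m$) is the rational function obtained by drawing $m$ signs without replacement from an urn holding $\frac{n+s}{2}$ spins $+1$ and $\frac{n-s}{2}$ spins $-1$, with $n=\frac{r+1}{2}$ the number of variables. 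The point is that $g_{r}$ is the same function for every $\beta$ and $h$; only the distribution of $S_{r}$ changes. An elementary hypergeometric-to-binomial computation shows $g_{r}(s_{r})\to\mu^{m}$ whenever $s_{r}/n\to\mu\in\left(-1,1\right)$, uniformly for $\mu$ in compact subsets of $\left(-1,1\right)$, while $\left|g_{r}\right|\leq1$ always.

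The second ingredient is a law of large numbers for the magnetisation: for every $\beta\geq0$ and $h\neq0$ one has $S_{r}/n\to x\left(\beta,h\right)$ in probability, where $x\left(\beta,h\right)$ is the root of (\ref{eq:CW_ext}) carrying the sign of $h$. This is classical (Sections IV.4 and V.9 of \cite{Ell1985}): a Laplace / large-deviation analysis shows the empirical mean concentrates on the global minimiser of the Curie--Weiss rate function, and for $h\neq0$ the external field breaks the symmetry, so this minimiser is unique and equals $x\left(\beta,h\right)$. Combining the two ingredients, using $x\left(\beta,h\right)\in\left(-1,1\right)$ and the uniform bound $\left|g_{r}\right|\leq1$, the bounded convergence theorem gives
\[
\IE\Bigl[X_{i_{1}}^{(r)}\cdots X_{i_{m}}^{(r)}\Bigr]=\IE\bigl[g_{r}\left(S_{r}\right)\bigr]\xrightarrow[r\rightarrow\infty]{}x\left(\beta,h\right)^{m},
\]
which is the assertion of Proposition \ref{prop:correlations_ext}.

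The main obstacle lies in the low-temperature regime $\beta>1$, where (\ref{eq:CW_ext}) has more than one root: one must verify that the \emph{global} minimiser of the rate function is precisely the root $x\left(\beta,h\right)$ of the sign of $h$, i.e.\ that turning on a field of sign $\operatorname{sgn}\left(h\right)$ immediately selects the correspondingly signed equilibrium no matter how small $\left|h\right|$ is. This is exactly what the analysis in \cite{Ell1985} establishes and can be quoted directly; equivalently, one may mirror the proof of part (3) of Proposition \ref{prop:correlations} in \cite{KiMM}, replacing $\tanh\beta x=x$ by (\ref{eq:CW_ext}) throughout. A purely technical secondary point is the uniformity in the hypergeometric limit, but since $\left|x\left(\beta,h\right)\right|<1$ this is only ever needed on compact subsets of $\left(-1,1\right)$, where it is routine. (Alternatively, a Hubbard--Stratonovich transform makes the $X_{i}^{(r)}$ conditionally i.i.d.\ given an auxiliary variable which, by Laplace's method, concentrates at $x\left(\beta,h\right)$; conditional independence then yields the factorised limit at once. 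I would use whichever of the two arguments is shorter to write out in full.)
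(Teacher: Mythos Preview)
The paper does not actually prove Proposition~\ref{prop:correlations_ext}. It is stated without any accompanying proof environment, in direct analogy to Proposition~\ref{prop:correlations} (which is itself only a pointer to Theorem~5.17 in \cite{KiMM}), and with an implicit appeal to the analysis of the Curie--Weiss model with field in Sections~IV.4 and~V.9 of \cite{Ell1985}. So there is no ``paper's own proof'' to compare against; your proposal supplies an argument where the paper supplies none.

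Your argument is correct. Conditioning on the total spin $S_{r}$ does make the configuration uniform and hence independent of $\beta,h$; the hypergeometric computation giving $g_{r}(s_{r})\to\mu^{m}$ when $s_{r}/n\to\mu$ is elementary and the bound $\lvert g_{r}\rvert\leq 1$ is immediate since it is an expectation of a $\{-1,1\}$-valued product; the law of large numbers $S_{r}/n\to x(\beta,h)$ for $h\neq0$ is precisely what \cite{Ell1985} provides (and you correctly isolate the one subtle point, namely that for $\beta>1$ the field selects the root with the sign of $h$ as the unique global minimiser of the rate function). Bounded convergence then closes the argument. The Hubbard--Stratonovich alternative you mention is in fact the method used in \cite{KiMM} to obtain the $h=0$ analogue quoted as Proposition~\ref{prop:correlations}, so either of your two routes is consonant with the sources the paper relies on. In short: nothing to fix; if anything, your write-up is more detailed than what the paper offers.
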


As this proposition suggests, contrary to the model without an external
magnetic field treated previously, we do not have distinct behaviour
in the form of three different regimes as a function of $\beta$.

We define the triangular array $\left(Y_{ri}\right)_{r\in\IN,i\in\left\{ \frac{r+1}{2},\ldots,r\right\} }$
by $Y_{ri}:=\frac{b_{ri}\left(1-2a_{i}-x\left(\beta,h\right)\right)}{s_{r}}=\frac{b_{ri}\left(X_{ri}-x\left(\beta,h\right)\right)}{s_{r}}$
and the moments $M_{k}$ for all $k\in\IN$ by

\begin{align*}
M_{k} & :=\IE\left(\sum_{i=\frac{r+1}{2}}^{r}Y_{ri}\right)^{k}=\frac{1}{s_{r}^{k}}\sum_{j=0}^{k}\left(\begin{array}{c}
k\\
j
\end{array}\right)\,\left(-1\right)^{k-j}\,x\left(\beta,h\right)^{k-j}\,\IE\lambda_{r}\left(\mathbf{a}\right)^{j}.
\end{align*}
The last factor in each summand above, $\IE\lambda_{r}\left(\mathbf{a}\right)^{j}$,
can be expanded as we did in the proof of Theorem \ref{thm:CWM}:
\begin{align*}
\IE\lambda_{r}\left(\mathbf{a}\right)^{j} & =\sum_{i_{1},\ldots,i_{j}=\frac{r+1}{2}}^{r}b_{ri_{1}}\cdots b_{ri_{j}}\IE X_{ri_{1}}\cdots X_{ri_{j}}\\
 & \approx\sum_{\text{\ensuremath{\underbar{r}}}\in\Pi^{j}}\frac{j!}{r_{1}!\cdots r_{j}!\,1!^{r_{1}}\cdots j!^{r_{j}}}\left(\sum_{i=\frac{r+1}{2}}^{r}b_{ri}\right)^{r_{1}}\cdots\left(\sum_{i=\frac{r+1}{2}}^{r}b_{ri}^{j}\right)^{r_{j}}\IE X\left(\text{\ensuremath{\underbar{r}}}\right).
\end{align*}
As a consequence of Proposition \ref{prop:correlations_ext}, the
correlations $\IE X\left(\text{\ensuremath{\underbar{r}}}\right)$
for all $\text{\ensuremath{\underbar{r}}}\in\Pi^{j}$ converge to
\[
\IE X\left(\text{\ensuremath{\underbar{r}}}\right)\xrightarrow[r\rightarrow\infty]{}x\left(\beta,h\right)^{o},
\]
where $o$ is the sum of the $r_{\ell}$ over all odd natural numbers
$\ell$ in the range $\left[1,j\right]$.

Thus, $M_{k}$ is a double sum over $j=0,\ldots,k$ and $\text{\ensuremath{\underbar{r}}}\in\Pi^{j}$.
The remainder of the proof consists of determining the factors which
depend on $r$ in each summand, 
\[
\Theta\left(\frac{1}{s_{r}}\right)\,\Theta\left(r^{\frac{1}{2}\sum_{\ell=1}^{j}r_{\ell}}\right),
\]
analysing which of the summands contribute asymptotically to $M_{k}$,
calculating their sum, and showing that it converges to the corresponding
moment of order $k$ of a centred normal distribution. Since this
procedure is very similar to the three cases we have already treated
in such fashion in the proof of Theorem \ref{thm:CWM}, we will omit
this part and conclude here.

\bibliographystyle{plain}

\end{document}